\documentclass{article}
\usepackage[english]{babel}
\usepackage{graphicx,multicol}
\usepackage{epic,eepic,epsfig}
\usepackage{amssymb}
\usepackage{amsmath,amsthm}
\usepackage{color}
\usepackage{tikz}
\usepackage{epsfig,url}
\usepackage{multirow}
\usepackage{amsfonts}
\usepackage{setspace}
\usepackage{geometry}
\geometry{left=3.0cm,right=3.0cm,top=3.0cm,bottom=3.0cm}

\usetikzlibrary{shapes,snakes}

\usepackage{eso-pic} 

\setlength{\topmargin}{0cm}
\setlength{\headheight}{0.1cm}
\setlength{\headsep}{0.1cm}
\setlength{\textheight}{23.4cm}
\setlength{\oddsidemargin}{0.1cm}
\setlength{\evensidemargin}{0.1cm}
\setlength{\textwidth}{15.5cm}

\newcommand{\induce}[2]{\mbox{$ #1 [ #2 ]$}}
\newcommand{\cs}{\JBJ{ss}}

\newcommand{\dom}{\mbox{$\rightarrow$}}

\newcommand{\be}{\begin{enumerate}}
\newcommand{\ee}{\end{enumerate}}
\newcommand{\bd}{\begin{description}}
\newcommand{\ed}{\end{description}}

\newcommand{\beq}{\begin{equation}}
\newcommand{\eeq}{\end{equation}}

\newcommand{\2}{\vspace{2mm}}





\renewenvironment{proof}[1][]{\par \noindent {\bf Proof#1}.\ }{\hfill$\Box$
\par \vspace{11pt}}

\newtheorem{theorem}{Theorem}[section]
\newtheorem{lemma}[theorem]{Lemma}
\newtheorem{proposition}[theorem]{Proposition}
\newtheorem{corollary}[theorem]{Corollary}
\newtheorem{claim}{Claim}

\theoremstyle{definition}

\newtheorem{conjecture}[theorem]{Conjecture}
\newtheorem{question}[theorem]{Question}

\newtheorem{fact}{Fact}
\newtheorem{case}{Case}

\newcommand{\pf}{{\bf Proof: }}
\newcommand{\JBJ}[1]{{#1}}
\newcommand{\AY}[1]{{#1}}

\begin{document}
\bibliographystyle{plain}

\begin{spacing}{1.2}

\title{Safe sets in digraphs}
\author{Yandong Bai\thanks{Department of Applied Mathematics, Northwestern Polytechnical University, Xi'an, Shaanxi Province 710129, China
Xi’an-Budapest Joint Research Center for Combinatorics, Northwestern Polytechnical University, Xi'an, Shaanxi Province 710129, China.
Supported by NSFC (No. 11601430) and  the Fundamental Research Funds for the Central Universities (No. 3102019ghjd003)
Email: bai@nwpu.edu.cn}, J\o{}rgen Bang-Jensen\thanks{Department of Mathematics and Computer Science, University of Southern Denmark. Research supported by the Danish research council for independent research under grant number DFF 7014-00037B. Email: jbj@imada.sdu.dk}, Shinya Fujita\thanks{School of Data Science, Yokohama City University, 22-2, Seto, Kanazawa-ku, Yokohama, 236-0027 Japan.
Supported by JSPS KAKENHI  (19K03603). 
Email: fujita@yokohama-cu.ac.jp}, Anders Yeo\thanks{Department of Mathematics and Computer Science, University of Southern Denmark. Research supported by the Danish research council for independent research under grant number DFF 7014-00037B. Email: yeo@imada.sdu.dk}}

\date{\today}
\maketitle

\begin{abstract}
A non-empty subset $S$ of the vertices of a digraph $D$ is called a {\it safe set}
if
\begin{itemize}
  \item[(i)] for every strongly connected component $M$ of $D-S$,
there exists a strongly connected component $N$ of $D[S]$
such that there exists an arc from $M$ to $N$; and \item[(ii)] for every strongly connected component $M$ of $D-S$
and every strongly connected component $N$ of $D[S]$,
we have $|M|\leq |N|$ whenever there exists an arc from $M$ to $N$.
\end{itemize}
\JBJ{In the case of acyclic digraphs a set $X$ of vertices is a safe set precisely when $X$ is an {\it in-dominating set}, that is, every vertex not in $X$ has at least one arc to $X$. We prove that, even for acyclic digraphs which are traceable (have a hamiltonian path) it is NP-hard to find a minimum cardinality in-dominating set.  Then we show that the problem is also NP-hard for tournaments and give, for every positive constant $c$, a polynomial algorithm for finding a minimum cardinality safe set in a tournament on $n$ vertices in which no strong component has size more than $c\log{}(n)$. Under the so called Exponential Time Hypothesis (ETH) this is close to best possible in the following sense: If ETH holds, then, for every $\epsilon>0$ there is no polynomial time algorithm for finding a minimum cardinality safe set for the class of tournaments in which the largest strong component has size at most \AY{$\log^{1+\epsilon}(n)$}.
  We also discuss bounds on the  cardinality of safe sets in tournaments.} \\
\noindent{}{\bf Keywords: Safe set, tournament, in-dominating set, NP-complete, polynomial algorithm}
\end{abstract}

\section{Introduction}

We use Bang-Jensen and Gutin \cite{BG2008} for terminology and notation not defined here.
Only finite, simple graphs and digraphs are considered. For a digraph $D$ we denote by $|D|$ the order of $D$, that is the number of vertices in $D$.

If $D=(V,A)$ is a digraph and $X\subset V$, then we denote by $D[X]$ the subdigraph induced by the vertices in $X$. If $U,W$ are disjoint subsets of the vertex set of a digraph $D=(V,A)$ such that there is an arc $uw\in A$ for every  $u\in U,w\in W$, then we denote this by $U\dom W$.

A digraph $D$ is {\it strongly connected} or {\it strong}
if there exists a directed path from $u$ to $v$ for any two vertices $u$ and $v$ of $D$,
and $D$ is {\it $k$-strong} if the removal of any set of fewer than $k$ vertices
results in a strongly connected digraph.
The strong connectivity of $D$, denoted $\kappa{}(D)$,  is the maximum  $k$ such that $D$ is $k$-strong.
In particular, a non-strongly connected digraph has strong connectivity 0.
A {\it separator} of a strong digraph $D=(V, A)$ is a proper subset $X\subset V$ such that the digraph $D-X$, obtained by deleting the vertices of $X$ and all incident arcs, is not strong. 

An {\it oriented graph} is a digraph without directed cycles of length 2. Note that every $k$-strong digraph
has both minimum out-degree and minimum in-degree at least $k$.
So every $k$-strong $n$-vertex oriented graph has $n\geq 2k+1$ and $k\leq \lceil n/2 \rceil -1$.
From this and the well-known fact that there are $k$-strong tournaments on $2k+1$ vertices (see also Section~\ref{tournament}), we get the following \JBJ{easy fact.}

\begin{proposition}\label{kappaatmost}
The connectivity of an $n$-vertex oriented graph
is at most $\lfloor\frac{n-1}{2} \rfloor$
and \JBJ{for every integer $n\geq 1$} there exists an $n$-vertex oriented graph
with connectivity $\lfloor \frac{n-1}{2} \rfloor$.
\end{proposition}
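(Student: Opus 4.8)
The statement has two parts: the upper bound $\kappa(D)\le \lfloor\frac{n-1}{2}\rfloor$ for every $n$-vertex oriented graph $D$, and the existence of an extremal example. For the upper bound I would just formalize the observation already recorded before the statement. Since an oriented graph has no $2$-cycles, for every vertex $v$ the out-neighbours and in-neighbours of $v$ are disjoint subsets of $V\setminus\{v\}$, so $d^+(v)+d^-(v)\le n-1$. If $D$ is $k$-strong then its minimum out-degree and minimum in-degree are both at least $k$, whence $2k\le d^+(v)+d^-(v)\le n-1$ for any $v$; this gives $k\le \frac{n-1}{2}$, and since $k$ is an integer, $\kappa(D)\le\lfloor\frac{n-1}{2}\rfloor$.

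For sharpness I would set $k=\lfloor\frac{n-1}{2}\rfloor$ and build the circulant digraph $D_n$ on vertex set $\mathbb{Z}_n$ in which $i\to j$ exactly when $(j-i)\bmod n\in\{1,\dots,k\}$. The first routine checks are that $D_n$ is an oriented graph and is $k$-regular. A $2$-cycle would require $(j-i)\bmod n$ and $(i-j)\bmod n$ to both lie in $\{1,\dots,k\}$, i.e. some residue in $\{1,\dots,k\}\cap\{n-k,\dots,n-1\}$, which is empty because $n\ge 2k+1$; so there are no digons. For odd $n=2k+1$ this $D_n$ is the usual rotational tournament, and for even $n=2k+2$ the only non-adjacent pairs are the antipodal ones $\{i,i+n/2\}$, so in either case $D_n$ is an oriented graph on $n$ vertices with $d^+(v)=d^-(v)=k$ for all $v$; in particular the first part already gives $\kappa(D_n)\le k$.

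It remains to prove $\kappa(D_n)\ge k$, that is, that $D_n-X$ is strong for every $X\subseteq \mathbb{Z}_n$ with $|X|\le k-1$; this is the step that needs real work. The key feature is that from any vertex $j$ one may step forward to any of $j+1,\dots,j+k$, and since $|X|\le k-1<k$, every block of $k$ consecutive vertices of $\mathbb{Z}_n$ contains at least one vertex outside $X$. Given surviving vertices $a,b$, I would run a greedy forward walk: starting at $a$, from the current surviving vertex $j$ move to any surviving vertex of $\{j+1,\dots,j+k\}$ (which exists by the window property), say the one of largest forward offset. Each step advances the position by between $1$ and $k$, so after finitely many steps the total forward progress exceeds $n$; because a step of size at most $k$ cannot jump over the length-$k$ target window $\{b-k,\dots,b-1\}$, the walk must at some point land on a vertex $j^\ast\in\{b-k,\dots,b-1\}$, and then the single arc $j^\ast\to b$ reaches $b$. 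Hence every surviving vertex reaches every other, so $D_n-X$ is strong, giving $\kappa(D_n)=k=\lfloor\frac{n-1}{2}\rfloor$.

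The only genuinely delicate point is making this greedy-walk argument rigorous on the cycle: precisely, that forward progress in steps of size at most $k$ cannot skip over the length-$k$ window $\{b-k,\dots,b-1\}$ (if it did, two consecutive walk positions would differ by at least $k+1$), and that the walk therefore actually lands in that window rather than merely passing near $b$. Everything else — the degree count, the no-digon check, and the $k$-regularity — is routine. As a shortcut one could instead handle odd $n$ by invoking the cited fact that $k$-strong tournaments on $2k+1$ vertices exist, and use the circulant construction above only for the even case; but since the single circulant $D_n$ covers all parities, I would prefer to present the unified argument.
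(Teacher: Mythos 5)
Your proof is correct and follows essentially the same route as the paper: the digon-free degree count $d^+(v)+d^-(v)\le n-1$ for the upper bound, and a rotational (circulant) construction for sharpness --- indeed your $D_n$ for odd $n$ is exactly the tournament $T^{\dag}$ of Section~\ref{tournament}, whose $k$-strongness the paper proves by the same ``every $k$ consecutive vertices contain a survivor'' window argument. The only place you go beyond the paper is that your single circulant on $\mathbb{Z}_n$ also handles even $n$ explicitly (with the antipodal pairs non-adjacent), whereas the paper merely invokes the well-known existence of $k$-strong tournaments on $2k+1$ vertices and leaves the even case implicit.
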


A non-empty subset $S$ of the vertices of a  connected \AY{undirected} graph $G$
is a {\it safe set} if,
for every connected component $M$ of $G-S$
and every connected component $N$ of $G[S]$,
we have $|M|\leq |N|$ whenever there exists an edge of $G$ between $M$ and $N$.
\AY{Note that as $G$ is a connected undirected graph this implies that every component of $G-S$ has an edge to $S$.}
For a connected graph $G$, the {\it safe number} of $G$ \JBJ{is} the minimum cardinality of \JBJ{a safe set} in $G$. \JBJ{This is well-defined since for any graph $G$ the set of all vertices form a safe set of $G$.}

The notion of safe sets was originally introduced by Fujita et al. \cite{FMS2016} as a variation of facility location problems. Kang et al. \cite{kkp} explored the safe number of the Cartesian product of two complete graphs, and Fujita and Furuya \cite{SF2018} found a close relationship between the safe number and the integrity of a graph.
Bapat et al. \cite{wsf} extended the notion of safe sets to the weighted version in vertex-weighted graphs $(G, w)$.

Motivated by some real applications such as network vulnarability, weighted safe set problems in graphs \JBJ{have received much attention, especially algorithmic aspects of safe sets}.
Fujita et al \cite{FMS2016} showed that computing the connected safe number in the case $(G, w)$ with a constant weight function $w$ is NP-hard in general. On the other hand,
when $G$ is a tree and $w$ is a constant weight function, they constructed a linear time algorithm for computing the connected safe number of $G$. \'{A}gueda et al. \cite{safe2018} constructed an efficient algorithm for computing the safe number of an unweighted graph with bounded treewidth. Somewhat surprisingly, Bapat et al. \cite{wsf} showed that computing the connected weighted safe number in a tree is NP-hard even if the underlying tree is restricted to be a star. They also constructed an efficient algorithm computing the safe number for a weighted path. Furthermore, Fujita et al. \cite{FPS:path:cycle} constructed a linear time algorithm computing the safe number for a weighted cycle.
Along a slightly different line, Ehard and Rautenbach \cite{ehard} gave a polynomial-time approximation scheme (PTAS) for the connected safe number of a weighted tree.
Very recently, the parameterized complexity of safe set problems was investigated by Belmonte et al. \cite{b} and a mixed integer \JBJ{linear programing formulation} for safe sets was introduced by Hosteins \cite{host}.  Thus, a lot of work has been done so far in this area of study.

In this paper, we consider the directed version of safe sets. 
A non-empty subset $S$ of the vertices of a digraph $D$ is a {\it safe set}
if the following two conditions hold:
\begin{itemize}
  \item [(i)] For every strongly connected component $M$ of $D-S$,
there exists a strongly connected component $N$ of $D[S]$
such that there exists an arc from $M$ to $N$;
  \item [(ii)] For every strongly connected component $M$ of $D-S$
and every strongly connected component $N$ of $D[S]$,
we have $|M|\leq |N|$ whenever there exists an arc from $M$ to $N$.
\end{itemize}
Moreover, if $D[S]$ is strongly connected,
then $S$ is called a {\it strong safe set} of $D$.
Observe that every digraph $D$ has a trivial safe set $S=V(D)$.

The {\it safe number} $s(D)$ of a digraph $D$ is defined as
\begin{equation}
s(D)=\min\{|S|: S~is~a~safe~set~of~D\},
\end{equation}
and the {\it strong safe number} $\cs{}(D)$ of $D$ is defined as
\begin{equation}
\cs{}(D)=\min\{|S|: S~is~a~strong~safe~set~of~D\}.
\end{equation}
Note that a non-strong digraph may not have a strong safe set,
for example, a directed path of length at least $2$ has no strong safe set.
For convenience,
we define $\cs{}(D)=+\infty$ if $D$ has no strong safe set.
Observe that if the underlying graph of $D$ is not connected then $\cs{}(D)=+\infty$.
Observe also that the following fact holds.

\begin{proposition}
Let $D$ be a digraph. Then $s(D)\leq \cs{}(D)$.
\end{proposition}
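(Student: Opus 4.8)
The plan is to observe that every strong safe set is, by definition, a safe set, so the collection of strong safe sets is a subfamily of the collection of all safe sets, and a minimum taken over a subfamily can only be larger; the inequality then follows from monotonicity of the minimum.

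First I would dispose of the degenerate case. If $D$ has no strong safe set, then by the stated convention $\cs{}(D)=+\infty$, while $s(D)$ is finite because $V(D)$ is always a (trivial) safe set of $D$; hence $s(D)\le \cs{}(D)$ holds vacuously. Assuming instead that $D$ admits at least one strong safe set, I would let $S$ be an arbitrary strong safe set. By definition a strong safe set is a safe set with the additional property that $D[S]$ is strongly connected, so in particular $S$ satisfies conditions (i) and (ii) and is therefore a safe set. Consequently $|S|\ge s(D)$. Taking the minimum over all strong safe sets $S$ yields $\cs{}(D)$ on the left and a quantity bounded below by $s(D)$ on the right, giving $s(D)\le \cs{}(D)$.

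I do not anticipate any real obstacle: the entire content is the inclusion of the family of strong safe sets into the family of all safe sets, which is immediate from the definitions, combined with the elementary fact that minimizing over a smaller set cannot produce a smaller value. The only point deserving a moment's care is the infinite case, which is handled at once by the convention $\cs{}(D)=+\infty$ together with the existence of the trivial safe set $V(D)$ guaranteeing that $s(D)$ is finite.
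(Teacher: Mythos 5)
Your proof is correct and matches the (omitted, since immediate) argument the paper intends: every strong safe set is by definition a safe set, so the minimum over the larger family is no greater, and the convention $\cs{}(D)=+\infty$ together with the trivial safe set $V(D)$ handles the case where no strong safe set exists. Nothing to add.
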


\JBJ{In the case when $D=(V,A)$ is an acyclic digraph a subset $X\subseteq V$ is a  safe set if and only if $X$ is an {\it in-dominating set}, that is, every vertex of $V-X$ has an arc to $X$. For any digraph $D$ we denote by $\gamma{}(D)$ the size of a minimum in-dominating set. The {\sc safe set} problem on digraphs is as follows: Given a digraph $D$ and a natural number $k$; decide whether $D$ has a safe set of size at most $k$.}

In this paper, we shall initiate the study on safe sets in digraphs. As an initial step, we will mainly \JBJ{focus on safe sets in acyclic digraphs and tournaments. 
The paper is organized  as follows.
In Section~\ref{complsec} we give  a simple proof of the well-known fact that it is NP-hard to find a minimum cardinality in-dominating (safe) set in an acyclic digraph and  show that we can find a minimum in-dominating (safe) set in polynomial time for an acyclic digraph with bounded independence number.  
In Section~\ref{DAGsec}, we show that, the problem for finding a minimum cardinality
  in-dominating (safe) set is NP-hard even if the input is an acylic digraph which has a hamiltonian path (and hence has a unique acyclic ordering\footnote{An {\it acyclic ordering} of a digraph $D=(V,A)$ is an ordering $v_1,\ldots{},v_n$ of its vertices such that every arc $v_iv_j\in A$ satisfies that $i<j$.}).
  In Section \ref{Talg} we study the complexity of {\sc safe set} for tournaments and semicomplete digraphs. We show that {\sc safe set} is NP-complete for tournaments and give a polynomial dynamic programming based method that finds a minimum cardinality safe set in any tournament whose maximum size strong component is at most a constant times the logarithm of its order. We also show that, under the well-known Exponential Time Hypothesis (ETH), there is no polynomial algorithm for finding a minimum cardinality safe set in a tournament whose largest component may be slightly larger than logarithmic.
In Section~\ref{tournament}, we discuss upper and lower bounds on the safe number and the strong safe number in tournaments. 
Finally, in  Section~\ref{last} we discuss some open problems and possible directions for further research}.  

\section{Complexity of finding safe sets \JBJ{in acyclic digraphs}}\label{complsec}

\JBJ{Recall that when} the digraph in question is acyclic, the definition of a safe set coincides with that of an {\it in-dominating set}. Even for acyclic digraphs \JBJ{\sc safe set is NP-complete}. This is well-known (under the name of in-dominating set), see e.g. \cite{ganianDAM168}, but we give a proof here since it is very short and we will refer to it in Section \ref{DAGsec}. The \JBJ{{\sc set cover}} problem is the following: given a collection $S_1,S_2,\ldots{},S_m$ of subsets of a ground set $X=\{x_1,x_2,\ldots{},x_n\}$ and a non-negative integer $k$; decide if there exists a subset $Z\subseteq X$ such that
$Z\cap S_i\neq\emptyset$ for $i\in [m]$ and $|Z|\leq k$. This is one of Karp's 21 NP-complete problems in \cite{karp1972}. To reduce this problem to the in-dominating set problem, we construct a digraph $D$ with vertices $s_1,\ldots{},s_m,v_1,\ldots{},v_n,z$ and add an arc from $s_i$ to $v_j$ whenever $x_j\in S_i$. Finally we add the arc $v_iz$ for $i\in [n]$. It is easy to see that $D$ has an in-dominating set of size at most $k+1$ if and only if there is a subset $Z\subseteq X$ of size $k$ that intersects every $S_i$, $i\in [m]$.

\JBJ{The class of acyclic digraphs that we construct  from  set cover instances above has unbounded independence number. We now  show that {\sc safe set} is polynomial for acyclic digraph of bounded independence number.} 
\AY{In order to prove this we need the following theorem due to Gy\'arf\'as et al. \cite{GST}.}

\begin{theorem}[Gy\'arf\'as et al. (Proposition 5; \cite{GST})]
\label{indep}
If $D$ is an acyclic digraph with independence number $\alpha$, then $\gamma(D)\le \alpha$.
\end{theorem}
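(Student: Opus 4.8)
The plan is to prove the stronger statement that every acyclic digraph $D$ admits an \emph{independent} in-dominating set, i.e. a \emph{kernel}: an independent set $K$ such that every vertex of $V(D)\setminus K$ has an arc into $K$. Once such a $K$ is in hand the theorem is immediate, since an independent set has at most $\alpha$ vertices and a kernel is in particular an in-dominating set, so $\gamma(D)\le |K|\le \alpha$. Thus all the content lies in constructing an independent in-dominating set, and the bound by $\alpha$ is then automatic.

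I would establish the existence of such a set by induction on $|V(D)|$, the empty digraph being trivial. Since $D$ is acyclic it contains a sink $t$ (a vertex of out-degree $0$). I would delete $t$ together with all of its in-neighbours $N^-(t)$ to obtain a smaller acyclic digraph $D'$, which by the induction hypothesis has an independent in-dominating set $K'$. The candidate kernel for $D$ is then $K=K'\cup\{t\}$, and it remains only to verify the two defining properties.

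For independence: there are no arcs inside $K'$ by the induction hypothesis, $t$ has no out-arcs because it is a sink, and $t$ receives no arc from $K'$ because every in-neighbour of $t$ was deleted, so $t$ is isolated from $K'$; hence $K$ is independent. For in-domination, note that $V(D)\setminus K = N^-(t)\,\cup\,\bigl(V(D')\setminus K'\bigr)$. Every vertex of $N^-(t)$ has an arc to $t\in K$ by definition, while every surviving vertex outside $K'$ already has an arc to $K'\subseteq K$ inside $D'$, and that arc persists in $D$ since we removed only vertices, never arcs among the remaining ones. Hence $K$ is an independent in-dominating set of $D$.

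The step requiring the most care is this in-domination check, since one must confirm that deleting $N^-(t)$ strands no vertex: a vertex kept in $D'$ retains its dominating arc, and the only vertices removed besides $t$ are precisely those already dominated by $t$, so nothing is lost. An equivalent and perhaps slicker route, which I would mention as an alternative, is to fix an acyclic ordering $v_1,\dots,v_n$ and decide membership in reverse order, placing $v_i$ in $K$ exactly when none of its (already-decided, higher-indexed) out-neighbours lie in $K$; the same two verifications show that this greedy set is an independent in-dominating set, again yielding $\gamma(D)\le\alpha$.
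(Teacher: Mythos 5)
Your proof is correct, and in fact it establishes more than the statement asks for. Note first that the paper does not prove Theorem~\ref{indep} at all: it is imported verbatim from Gy\'arf\'as, Simonyi and T\'oth \cite{GST} (their Proposition~5), so there is no in-paper argument to compare against. What you supply is a complete, self-contained proof of the stronger classical fact that every acyclic digraph has a \emph{kernel} (an independent in-dominating set), from which the bound $\gamma(D)\le\alpha$ falls out immediately. Both of your routes are sound: in the inductive version, deleting the sink $t$ together with $N^-(t)$ correctly preserves independence (no arc leaves $t$, and every arc into $t$ has had its tail removed, so $t$ is nonadjacent to $K'$ in both directions --- the relevant check since independence in a digraph forbids arcs either way) and preserves domination (the removed vertices other than $t$ are exactly those dominated by $t$, and arcs among surviving vertices are untouched). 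The greedy reverse-acyclic-ordering variant is the same argument unrolled. The only cosmetic caveat is the degenerate base case: the paper's conventions make an in-dominating set of the empty digraph vacuous, so starting the induction at $|V(D)|=0$ is harmless, and for nonempty $D$ your kernel is automatically nonempty since it contains a sink.
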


\AY{ The following result now follows easily.}

\begin{theorem}\label{alpha}
  \JBJ{For every fixed natural number $\alpha$ {\sc safe set} is polynomial for the class of acyclic digraphs with independence number at most $\alpha$.}

\end{theorem}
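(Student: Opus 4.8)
The plan is to reduce everything to the bound provided by Theorem~\ref{indep}. Since the input digraph $D=(V,A)$ is acyclic, a safe set is exactly an in-dominating set, so it suffices to compute $\gamma(D)$ and, if desired, exhibit a minimum in-dominating set. Theorem~\ref{indep} guarantees $\gamma(D)\le\alpha$, where $\alpha$ is a fixed constant; this is the whole engine of the argument, as it means the optimal solution has constant size and can therefore be located by exhaustive search. First I would record one elementary monotonicity remark: in-dominating sets are closed under taking supersets, since if every vertex of $V\setminus X$ has an arc into $X$ and $X\subseteq X'$, then every vertex of $V\setminus X'$ (a subset of $V\setminus X$) still has an arc into $X\subseteq X'$. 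Consequently, deciding whether $D$ has a safe set of size at most $k$ has answer ``yes'' automatically whenever $k\ge\alpha$, and for $k<\alpha$ it reduces to searching for an in-dominating set of size exactly $\min\{k,\alpha\}$.

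For the optimization version I would then iterate $j=1,2,\ldots,\alpha$ and, for each $j$, enumerate all $\binom{|V|}{j}$ subsets $X\subseteq V$ with $|X|=j$, testing each for the in-domination property by checking, for every $v\in V\setminus X$, whether $v$ has at least one out-neighbour in $X$; a single such test costs $O(|V|+|A|)$ time. The least $j$ for which some examined $X$ passes equals $\gamma(D)$, and by Theorem~\ref{indep} this loop is guaranteed to succeed with $j\le\alpha$. The running-time bound is immediate: the total number of subsets examined is $\sum_{j=1}^{\alpha}\binom{|V|}{j}=O(|V|^{\alpha})$, and each is tested in polynomial time, giving $O(|V|^{\alpha}(|V|+|A|))$ overall, which is polynomial for every fixed $\alpha$.

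I do not expect a genuine obstacle here: once Theorem~\ref{indep} pins the optimum at constant size, brute force over constant-size vertex subsets suffices, and the only points requiring care are the trivial upward-closure remark (so that ``size at most $k$'' behaves correctly) and the elementary counting of candidate subsets. The one thing to flag honestly is that the polynomial degree of the algorithm grows with $\alpha$, so this yields membership in the class of problems solvable in time $n^{O(\alpha)}$ rather than a fixed-parameter (FPT) algorithm; since the statement only asserts polynomiality for each fixed $\alpha$, this is exactly what is needed.
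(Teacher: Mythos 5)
Your proposal is correct and follows essentially the same route as the paper's proof: invoke Theorem~\ref{indep} to bound the minimum in-dominating (hence safe) set by the constant $\alpha$, then brute-force over all subsets of size at most $\alpha$. Your additional remarks on upward-closure and on the explicit $O(|V|^{\alpha}(|V|+|A|))$ verification cost are fine but not needed beyond what the paper records.
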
 

\pf 
\JBJ{ Theorem~\ref{indep} implies that $D$ contains an in-dominating set $S$ of size at most $\alpha$. 
Since $D$ is an acyclic digraph, $S$ is also a safe set in $D$. Thus, to find out a minimum safe set in $D$, we just have  to check all the vertex subsets of size at most $\alpha$ in $V(D)$. This can be done in time $O(n^{\alpha})$.}   \qed

\section{In-dominating sets in \JBJ{traceable} acyclic digraphs}\label{DAGsec}

The acyclic digraph used in the reduction from \JBJ{{\sc set cover} to {\sc in-dominating set}}  above has very few arcs and no long directed paths. Furthermore, finding a minimum in-dominating set is trivial for acyclic tournaments (these are the transitive tournaments), where the minimum is always one. Hence one might think that the problem could be polynomial for acyclic digraphs that share some of the structure of the transitive tournaments. One such property is that of having a unique acyclic ordering.
We shall prove below that even in the more restricted case where the acyclic digraph has a hamiltonian path, the problem is still $\cal NP$-hard. To do so we first need a few results on special classes of satisfiability problems.

For a given instance $\cal F$ of \AY{{\sc $k$-SAT}} with variables $x_1,x_2,\ldots{},x_n$ and clauses $C_1,C_2,\ldots{},C_m$  we define the graph $G=G({\cal F})$ as follows:
$G$ has one vertex $c_j$ for each clause $C_j$, $j\in [m]$ and one vertex $v_i$ for each variable $x_i$, $i\in [n]$. There is an edge between $c_i$ and $c_j$ if these share a literal and there is an edge between $v_i$ and $c_j$ if the variable $x_i$ appears in a literal of $C_j$ (either as $x_i$ or as $\bar{x}_i$). We say that an instance $\cal F$ of \AY{{\sc $k$-SAT}} is {\it traceable} if the graph $G({\cal F})$ has a hamiltonian path.

The {\it bipartite incidence graph} $B({\cal F})$ of a \AY{{\sc $k$-SAT}} instance $\cal F$ with $n$ variables and $m$ clauses is the bipartite graph which has one vertex $v_i$  for each  variable $x_i$ and one vertex $c_j$ for each  clause $C_j$ and an edge from $v_i$ to $c_j$ precisely when $C_j$ contains a literal over $x_i$. Hence $B({\cal F})$ is a subgraph of $G({\cal F})$.

\begin{lemma}
\label{reducibleformula}
If $B({\cal F})$ does not have a matching covering every $v_i$, then we can delete some clauses from $\cal F$ to obtain an equivalent instance ${\cal F}'$.
\end{lemma}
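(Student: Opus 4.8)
The plan is to exploit the deficiency version of Hall's theorem (equivalently, König's theorem) to locate a set of variables that privately controls more clauses than can constrain it, and then argue that those clauses are redundant and may be deleted. First I would assume without loss of generality that every variable $x_i$ occurs in at least one clause, since a variable occurring in no clause can simply be discarded without affecting satisfiability; under this assumption every vertex $v_i$ has a neighbour in $B({\cal F})$.

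Since $B({\cal F})$ has no matching saturating the variable side $\{v_1,\dots,v_n\}$, I would fix a maximum matching $M$ and let $U$ be the set of vertices reachable from the $M$-unsaturated variable-vertices along $M$-alternating paths. Put $A = U\cap\{v_1,\dots,v_n\}$ and $D = U\cap\{c_1,\dots,c_m\}$. The standard alternating-path analysis yields three facts that I would verify in turn: (i) every clause-vertex of $D$ is $M$-saturated and its partner lies in $A$, for otherwise an unsaturated clause in $D$ would complete an augmenting path, contradicting maximality of $M$; (ii) $N(A)\subseteq D$, so the variables of $A$ occur only in the clauses of $D$; and (iii) $|D|<|A|$, because (i) makes $M$ restrict to a bijection between $D$ and the $M$-saturated vertices of $A$, while $A$ also contains at least one unsaturated vertex. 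In particular $M$ restricts to a matching that saturates every clause of $D$ into a distinct variable of $A$, and by the non-degeneracy assumption $D=N(A)\neq\emptyset$, so at least one clause is deleted.

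I would then set ${\cal F}' = {\cal F}\setminus D$ and prove equivalence in both directions. One direction is immediate: deleting clauses can only make satisfaction easier, so any satisfying assignment of $\cal F$ also satisfies ${\cal F}'$. For the converse, given a satisfying assignment $\varphi$ of ${\cal F}'$, I would modify it only on the variables of $A$: for each clause $c\in D$ let $M(c)\in A$ be its matched variable, and reset $M(c)$ so that the literal of $c$ over $M(c)$ becomes true, thereby satisfying $c$. Because $M$ is a matching these assignments never conflict, and because $N(A)\subseteq D$ the variables of $A$ do not occur in any clause of ${\cal F}'$, so the clauses of ${\cal F}'$ stay satisfied. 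Hence $\cal F$ is satisfiable, and ${\cal F}$ and ${\cal F}'$ are equivalent.

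The crux of the argument, and the step I expect to require the most care, is isolating the pair $(A,D)$ so that two properties hold simultaneously: the clauses of $D$ must be exactly those touched by $A$ (so that resetting $A$ is harmless on ${\cal F}'$), and $D$ must admit a matching into $A$ that saturates $D$ (so that all of $D$ can be satisfied at once). The König alternating-path set $U$ supplies both at one stroke; alternatively one could take $A$ to be a set of maximum deficiency $|A|-|N(A)|$ and check, by a short exchange argument, that Hall's condition then holds from the clause side of $N(A)$, which again yields the required saturating matching. Everything else is routine bookkeeping.
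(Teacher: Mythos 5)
Your proof is correct and follows essentially the same strategy as the paper: isolate a deficient set of variables $A$ together with its clause-neighbourhood $D$, equipped with a matching saturating $D$ into $A$, delete the clauses of $D$, and use the matched variables to satisfy $D$ without disturbing the remaining formula. The only difference is how the pair $(A,D)$ is produced --- you use a maximum matching and K\"onig-style alternating paths, whereas the paper takes a Hall set of maximum deficiency and applies Hall's theorem a second time to the induced subgraph --- and, as you note yourself, the two mechanisms are interchangeable.
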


\pf Suppose there is no matching covering $\{v_1,v_2,\ldots{},v_n\}$. Then by Hall's theorem, there exists a subset $U\subseteq \{v_1,v_2,\ldots{},v_n\}$ such that $|N(U)|<|U|$, where $N(U)=\{c_{i_1},\ldots{},c_{i_p}\}$ is the set of clauses containing a literal on at least one of the variables in $U$. Assume that $U$ is chosen among all such non-matcheable sets so that $r=|U|-|N(U)|$ is maximized.
Then, it follows from Hall's theorem applied to the bipartite subgraph induced by $U\cup N(U)$ that this has a matching meeting all vertices of $\{c_{i_1},\ldots{},c_{i_p}\}$. Thus we can satisfy all clauses in  $\{C_{i_1},\ldots{},C_{i_p}\}$ using only the variables in $U$ and these variables cannot be used to 
satisfy any clause $C_j\in \{C_1,C_2,\ldots{},C_m\}-\{C_{i_1},\ldots{},C_{i_p}\}$ since they do not appear as a literal in $C_j$. This means that the formula ${\cal F}'$ consisting of the variables of $\{v_1,\ldots{},v_n\}-U$ and the clauses
$\{C_1,C_2,\ldots{},C_m\}-\{C_{i_1},\ldots{},C_{i_p}\}$ is satisfiable if and only if $\cal F$ is satisfiable. \qed

Call a \AY{{\sc $3$-SAT}} instance $\cal F$ {\it irreducible} if  $B({\cal F})$ has  a matching covering every variable vertex $v_i$. By Lemma \ref{reducibleformula}, 3-SAT is still NP-complete when the input is an irreducible instance.

\begin{lemma}
\label{HampathSAT}
\AY{{\sc $4$-SAT}} is NP-complete even when the instance is traceable.
\end{lemma}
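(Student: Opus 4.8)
The plan is to reduce from {\sc irreducible $3$-SAT}, which is NP-complete by Lemma~\ref{reducibleformula} and the remark following it. Given an irreducible instance $\mathcal F$ with variables $x_1,\dots,x_n$ and clauses $C_1,\dots,C_m$, irreducibility supplies a matching in $B(\mathcal F)$ covering every variable vertex; I fix such a matching and write $C_{\sigma(i)}$ for the clause matched to $x_i$, so that $x_i$ occurs in $C_{\sigma(i)}$ and the indices $\sigma(1),\dots,\sigma(n)$ are distinct. From $\mathcal F$ I would build a $4$-SAT instance $\mathcal F'$ that is equisatisfiable with $\mathcal F$ and whose graph $G(\mathcal F')$ has a hamiltonian path. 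The only operation I allow is \emph{adding} clauses; I never touch an original clause, so equisatisfiability is automatic as long as every added clause is \emph{satisfiability-neutral}.

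The key device is a \emph{tautological link clause}: a clause of width at most $4$ containing a complementary pair $x_t\vee\bar x_t$, hence satisfied by every assignment. Such a clause contributes nothing to satisfiability, yet its vertex in $G(\mathcal F')$ is adjacent to every clause with which it shares a literal and to the variable vertices of the variables it mentions. The crucial point, and the reason width $4$ rather than $3$ is required, is that a single width-$4$ tautological clause $(\mu_P\vee\mu_Q\vee x_t\vee\bar x_t)$ can be made adjacent to \emph{two} prescribed clause vertices $P$ and $Q$ at once (taking $\mu_P,\mu_Q$ to be literals of $P,Q$) while still being a tautology; for arbitrary $P,Q$ a width-$3$ tautological clause can reach only one prescribed neighbour, which is too weak to splice vertices into a path.

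Using this device I would exhibit the path explicitly. First thread the variable vertices through their matched clauses along $v_1, c_{\sigma(1)}, L_1, v_2, c_{\sigma(2)}, L_2,\dots, v_n, c_{\sigma(n)}$, where $L_i=(\mu\vee x_{i+1}\vee\bar x_{i+1})$ is a tautological link with $\mu$ a literal of $C_{\sigma(i)}$, so $L_i$ is adjacent to $c_{\sigma(i)}$ (shared literal $\mu$) and to $v_{i+1}$ (it mentions $x_{i+1}$); the edges $v_ic_{\sigma(i)}$ come from the matching, and injectivity of $\sigma$ keeps the $c_{\sigma(i)}$ distinct so that no two variable vertices compete for the same anchor. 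The unmatched original clauses are appended as a tail $c_{\sigma(n)}, M_1, c_{j_1}, M_2, c_{j_2},\dots$ with each $M_t=(\mu_P\vee\mu_Q\vee x_t\vee\bar x_t)$ a width-$4$ tautological link adjacent to its two neighbouring clause vertices. Since no new variables are introduced, no new variable vertices appear, and the path visits every vertex of $G(\mathcal F')$.

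The step I expect to be the main obstacle is reconciling satisfiability-neutrality with the exact-width convention. If $4$-SAT is taken to mean clauses of width \emph{exactly} $4$ with distinct variables (so the tautological pairs above are forbidden), then one must pad each original $3$-clause to width $4$, and the only neutral way to do this is the standard equivalence $C\equiv(C\vee w)\wedge(C\vee\bar w)$ with a fresh variable $w$. Every such padding creates a new variable vertex (the $w$'s) that must itself be threaded into the path, so the gadget has to guarantee that each new vertex has degree at least two in the right place and can be routed through — for instance by observing that $w$ sits in a triangle with the two copies $C\vee w$ and $C\vee\bar w$, through which the path can detour. Checking that this enlarged graph still admits an explicit hamiltonian path while all added clauses remain neutral is the delicate part, and the matching guaranteed by irreducibility is exactly what makes the routing through the variable vertices go through.
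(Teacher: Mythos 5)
Your route is genuinely different from the paper's. The paper keeps every clause rigidly at width four and pays with a single fresh variable: it duplicates each clause $C'_j$ of an irreducible {\sc 3-SAT} instance into $C_{j,1}=C'_j\vee x_{n'+1}$ and $C_{j,2}=C'_j\vee\bar{x}_{n'+1}$, where $x_{n'+1}$ is one new variable shared by all clauses. This is satisfiability-neutral, every clause has exactly four pairwise non-complementary literals, the two copies $c_{j,1},c_{j,2}$ are adjacent (they share three literals), all the $c_{j,1}$'s form a clique and likewise the $c_{j,2}$'s, and the two inherited copies of $B({\cal F}')$ each carry a matching covering the variable vertices by irreducibility; the hamiltonian path then weaves each $v_j$ between $c_{j,1}$ and $c_{j,2}$ and sweeps up the remaining clause pairs through the cliques. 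You instead leave the original clauses untouched and stitch the path together with added tautological clauses containing a complementary pair $x_t\vee\bar{x}_t$. Your use of the matching supplied by irreducibility to anchor each variable vertex is the same idea as the paper's, and the spine-and-tail routing you describe does trace a hamiltonian path in the graph you build.

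The substantive problem is the one you flagged yourself, and it is not a side issue. Your primary construction establishes hardness only for instances that contain tautological clauses and clauses of width three; under the convention the paper actually uses downstream (the next theorem adds an arc from $c_j$ to the vertex of ``each of the 4 literals of $C_j$'', i.e.\ clauses have exactly four literals, and a clause is normally not allowed to contain a literal together with its negation) your link clauses are not legal clauses at all. Worse, the traceability of your instance rests entirely on these degenerate clauses: delete the tautologies, as any formulation forbidding them would, and $G({\cal F})$ need no longer be traceable, so what you prove is fragile and formulation-dependent. Your proposed repair --- padding each $C$ to $(C\vee w)\wedge(C\vee\bar{w})$ with a fresh $w$ per clause --- creates $m$ new degree-two variable vertices, each forcing both of its incident edges onto any hamiltonian path, and you have not verified that the original variable vertices can still be threaded around these forced blocks; that is precisely the part you left open. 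The paper's single shared fresh variable is the clean resolution: it gives the neutrality you want, keeps every clause at exact width four with no complementary pair, and adds only one new variable vertex $v_{n'+1}$, which is inserted between $c_{n'+1,1}$ and $c_{n'+1,2}$ on the path. To close your proof, replace the tautological links by this device (or complete the per-clause padding argument in full).
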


\pf Let ${\cal F}'$ be an arbitrary irreducible instance of \AY{{\sc $3$-SAT}} with variables $x_1,\ldots{},x_{n'}$ and clauses $C'_1,\ldots{},C'_{m'}$. By adding new variables or clauses we may assume that both $n'$ and $m'$ are even. We form the \AY{{\sc $4$-SAT}} instance $\cal F$ with variables $x_1,\ldots{},x_{n'},x_{n'+1}$, where $x_{n'+1}$ is a new variable and forming two clauses $C_{j,1},C_{j,2}$ for each clause $C'_j$ of ${\cal F}'$, where $C_{j,1}$ is obtained  by adding the literal $x_{n'+1}$ to 
$C'_j$ and $C_{j,2}$ is obtained by adding the literal $\bar{x}_{n'+1}$ to $C'_j$.  Clearly ${\cal F}'$ is satisfiable if and only if $\cal F$ is satisfiable so we just have to argue that $G({\cal F})$ has a hamiltonian path. To see this, first note that, by the definition of $G({\cal F})$,  the set of all clauses containing the literal $w$ ($\bar{w}$) induce a clique in $G({\cal F})$ and for $j\in [m']$ the vertices $c_{j,1},c_{j,2}$ are adjacent (they share 3 literals). Furthermore, $G({\cal F})$ contains two  isomorphic copies of $B({\cal F}')$. These are induced by the vertices 
$\{v_1,\ldots{},v_{n'}\}\cup \{c_{1,1},\ldots{},c_{m',1}\}$ and 
$\{v_1,\ldots{},v_{n'}\}\cup \{c_{1,2},\ldots{},c_{m',2}\}$, respectively.
Now it follows from the fact that ${\cal F}'$ is an irreducible instance that
there are  two matchings $M_1,M_2$ both of size $n$ which cover  $\{v_1,\ldots{},v_{n'}\}$ and so that $M_i$ matches this set to  $\{c_{1,i},\ldots{},c_{m',i}\}$ for $i=1,2$. Furthermore we can assume that if $v_i$ is matched to $c_{j_i,1}$ in $M_i$, then it is matched to $c_{j_i,2}$ in $M_2$. By renumbering the clauses in $\cal F$ if necessary,
we may assume that $c_{j,i}$ is matched to $v_j$ in $M_i$ for $j\in [n']$. Finally, we may also assume that $m'>n'+1$ and now we get a hamiltonian path of $G({\cal F})$ by taking the path 
$c_{1,1}v_1c_{1,2}c_{2,2}v_2c_{2,1}c_{3,1}v_3c_{3,2}\ldots{}c_{{n'},2}v_{n'}c_{{n'},1}c_{{n'}+1,1}v_{n'+1}c_{{n'}+1,2}c_{{n'}+2,2}c_{{n'}+2,1}\ldots{}c_{m-1,1}c_{m-1,2}c_{m,2}c_{m,1}$.\\
 Here we used the assumption that both ${n'}$ and $m'$ are even.
\qed

\begin{theorem}
It is NP-complete to decide for input $D,k$ where $D$ is a traceable acyclic digraph and $k$ is an integer, whether $D$ has an in-dominating set of size at most $k$.
\end{theorem}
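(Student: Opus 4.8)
The plan is to reduce from {\sc $4$-SAT} restricted to traceable instances, which is NP-complete by Lemma~\ref{HampathSAT}. Given such an instance $\mathcal{F}$ with variables $x_1,\dots,x_n$ and clauses $C_1,\dots,C_m$, I would build an acyclic digraph $D$ in the spirit of the reduction from {\sc set cover} in Section~\ref{complsec}, augmented by a variable-consistency gadget. The digraph $D$ has a single sink $z$, a \emph{literal vertex} for each of $x_i$ and $\bar{x}_i$, a \emph{clause vertex} $c_j$ for each $C_j$, and a \emph{demand vertex} $d_i$ for each variable. Its arcs are: an arc from $d_i$ to each of $x_i,\bar{x}_i$; an arc from $c_j$ to every literal vertex occurring in $C_j$; and an arc from every literal vertex to $z$. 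Since $z$ is a sink, $z$ lies in every in-dominating set; every literal vertex is then in-dominated through $z$; a clause vertex $c_j$ is in-dominated exactly when $X$ contains a literal of $C_j$; and a demand vertex $d_i$ is in-dominated exactly when $X$ contains $x_i$ or $\bar{x}_i$. Setting the budget to $k=n+1$, the $n$ vertices of a solution other than $z$ must in-dominate the $n$ demand vertices, and because distinct demand vertices have disjoint out-neighbourhoods among the literals this forces precisely one literal per variable; the same choice in-dominates all clause vertices if and only if the induced truth assignment satisfies $\mathcal{F}$. This yields $\gamma(D)\le n+1$ if and only if $\mathcal{F}$ is satisfiable.

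To make $D$ traceable I would exploit the Hamiltonian path of $G(\mathcal{F})$. The gadget arcs force every acyclic ordering of $D$ to list the demand and clause vertices first, then the literal vertices, and finally $z$. I would therefore order the first block according to the Hamiltonian path of $G(\mathcal{F})$, after identifying each variable vertex of $G(\mathcal{F})$ with the corresponding demand vertex, then list the literal vertices in any order, and end with $z$; finally I would add the forward arcs between consecutive vertices of this ordering that are not already gadget arcs. Every gadget arc and every added arc is consistent with this linear order, so $D$ stays acyclic, and since consecutive vertices are now joined by an arc the ordering is a directed Hamiltonian path. Hence $D$ is a traceable acyclic digraph with this as its unique acyclic ordering.

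The remaining and, I expect, hardest step is to re-establish the equivalence with the budget $k=n+1$ in the presence of the added \emph{spine} arcs. Adding arcs can only create new ways to in-dominate a vertex, so the forward implication stays easy ($z$ together with the true literals is an in-dominating set of size $n+1$), and all the work is in the converse: I must show that a solution of size at most $n+1$ cannot in-dominate the demand and clause vertices through spine arcs unless it encodes a genuine satisfying assignment. This is exactly where I would use that the spine comes from $G(\mathcal{F})$: a spine arc inside the first block is either an arc between two clause vertices sharing a literal, or an arc between a demand vertex $d_i$ and a clause vertex $c_j$ with $x_i$ occurring in $C_j$. Consequently any shortcut, in which some demand or clause vertex is in-dominated by a spine-neighbour that lies in $X$, can be rerouted through a legitimate literal of the corresponding variable, which dominates at least as much; so one may assume the $n$ non-sink vertices are literals, one per variable. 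The genuinely dangerous arc is the single one crossing from the first block into the literal block, which hands out one free in-domination regardless of the assignment; I would neutralise it by attaching a pendant dummy vertex at that boundary (or by arranging the boundary vertices so that the crossing arc coincides with a gadget arc), after which no shortcut survives.

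Once the spine arcs are shown to be harmless, the counting argument of the first paragraph applies unchanged: the $n$ non-sink vertices of any solution of size $n+1$ must still be exactly one literal per variable, whence such a solution exists if and only if $\mathcal{F}$ is satisfiable. Combined with the polynomiality of the construction and the traceability verified above, this completes the reduction and proves the theorem.
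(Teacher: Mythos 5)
Your proposal is correct and follows essentially the same route as the paper: the same literal/clause/demand gadget with sink and budget $n+1$, the same spine built from the Hamiltonian path of $G(\mathcal{F})$, the same rerouting of spine shortcuts through a shared literal, and the same fix for the block-crossing arc (the paper relabels so that the last clause contains the literal $x_1$, making that arc a gadget arc). No substantive differences to report.
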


\pf Let $\cal F$ be a traceable instance of \AY{{\sc $4$-SAT}} with variables $x_1,x_2,\ldots{},x_n$ and clauses $C_1,C_2,\ldots{},C_m$ and let $P$ be a hamiltonian path of $G({\cal F})$ which starts in the vertex corresponding to $C_1$ and ends in the vertex corresponding to $C_m$ (we may choose the numbering of the clauses so that this holds). We may also assume, by relabeling the variables and negating all clauses if necessary, that $x_1$ appears as the  literal $x_1$ in $C_m$.
We build a traceable  acyclic digraph $D$ which has an in-dominating set of size $n+1$ if and only if $\cal F$ is satisfiable. The vertex set of $D$ is $\{v_1,\ldots{},v_n\}\cup \{c_1,\ldots{},c_m\}\cup\{w_1,\ldots{},w_n\}\cup\{\bar{w}_1,\ldots{}\bar{w}_n\}\cup\{u\}$ where the vertex $c_j$ corresponds to the clause $C_j$, $j\in [m]$ and the three vertices $v_i,w_i,\bar{w}_i$ correspond to the variable $x_i$, $i\in [n]$. Here $w_i$ corresponds to the literal $x_i$ and $\bar{w}_i$ corresponds to the literal $\bar{x}_i$. Now we add the following arcs: 

\begin{itemize}
\item Start with no arcs.
\item First add the arcs $\{v_iw_i,v_i\bar{w}_i: i\in [n]\}\cup\{w_iu,\bar{w}_iu: i\in [n]\}$.
For $j\in [m]$ and each of the 4 literals of $C_j$ add an arc from $c_j$ to the vertices corresponding to these literals, e.g. if $C_j=(x_3\vee\bar{x}_5\vee\bar{x}_7\vee x_9)$ then we add the arec $c_jw_3,c_j\bar{w}_5,c_j\bar{w}_7,c_jw_9$.
\item Note that the vertices $\{c_1,\ldots{},c_m\}\cup\{v_1,\ldots{},v_n\}$ correspond to those of $G({\cal F})$. Add the arcs of $P$ oriented as a directed path $Q$ from $c_1$ to $c_m$ (recall that this path contains all the vertices in  $\{c_1,\ldots{},c_m\}\cup\{v_1,\ldots{},v_n\})$.
\item Add the arcs of the directed path $Q'=w_1\bar{w}_1w_2\bar{w}_2\ldots{}w_n\bar{w}_n$.
\end{itemize}

First note that $P^*=QQ'u$ is a hamiltonian path of $D$ and that $D$ is acyclic (all arcs go forward in the ordering inducted by $P^*$). We will now show that $D$ has an in-dominating set of size $n+1$ if and only if $\cal F$ is satisfiable.

Suppose first that $\phi$ is a satisfying truth assignment for $\cal F$ then let $Z$ be a subset of $V(D)$ defined as follows: for each $i\in [n]$ if $\phi{}(x_i)=`true'$, then $Z$ contains $w_i$ and if $\phi{}(x_i)=`false'$ then $Z$ contains $\bar{w}_i$. Finally $Z$ contains $u$. Now let us see that $Z$ is an in-dominating set. Note that, for each $i\in [n]$ we add the vertex corresponding to the true literal among $x_i$ and $\bar{x}_i$ to $Z$. This means that every vertex of $\{c_1,\ldots{},c_m\}$ has an out-neighbour in $Z$. Clearly this also holds for the vertices in $\{v_1,\ldots{},v_n\}$. Finally, every vertex of $\{w_i,\bar{w}_i: i\in [n]\}$ dominates $u$ so $Z$ is in-dominating.

Now suppose that $D$ has an in-dominating set of size $n+1$. It is clear that $u$ is in every in-dominating set as it has out-degree zero. We claim that $D$ also contains an in-dominating set $Z$ of size $n+1$ such that $Z\subset \{w_i,\bar{w}_i: i\in [n]\}\cup \{u\}$.  This follows from the way we defined adjacencies in the graph $D$: Let $Z$ be an in-dominating set of size $n+1$ which uses the minimum number of vertices from $\{c_1,\ldots{},c_m\}\cup \{v_1,\ldots{},v_n\}$. Recall from the definition of $D$ that the maximum semi-degree of a vertex in $D'=\induce{D}{\{c_1,\ldots{},c_m\}\cup \{v_1,\ldots{},v_n\}}$ is one ($D'$  is an induced  hamiltonian directed path). Suppose that $Z$ contains a vertex $c_i$ and let $q$ be the unique in-neighbour of $c_i$ in $D'$. Then $c_i$ and $q$ have a common out-neighbour $y$ in $\{w_j,\bar{w}_j\}$ for some $j\in [n]$, implying that we could replace $c_i$ by $y$ and still have an in-dominating set, contradicting the choice of $Z$. It is easy to see that the same contradiction can be reached if $Z$ contained a vertex $v_a$ for some $a$. Thus we have shown that $Z\subset\{w_i,\bar{w}_i: i\in [n]\}\cup \{u\}$. Now it follows from the fact that each $v_i$ has precisely two out-neigbours in $\{w_i,\bar{w}_i: i\in [n]\}\cup \{u\}$ that  $Z$ must contain precisely one of the vertices $w_i,\bar{w}_i$ for each $i\in [n]$. Thus setting $x_i$ 'true' if $w_i\in Z$ and 'false' if
$\bar{w}_i\in Z$, every clause will contain a true literal so we obtain a satisfying truth assignment to $\cal F$. \qed.



\begin{corollary}
It is NP-complete to decide for input $D,k$ where $D$ is a traceable acyclic digraph and $k$ is an integer, whether $D$ has a safe set of size at most $k$.
\end{corollary}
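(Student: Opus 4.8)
The plan is to obtain this as an immediate consequence of the preceding theorem together with the equivalence, already recorded in Section~\ref{complsec}, between safe sets and in-dominating sets in acyclic digraphs. There is essentially no new combinatorial content here; the only points to make explicit are membership in NP and the fact that the equivalence applies to the precise digraph produced by the reduction.

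First I would verify that {\sc safe set} lies in NP. Given a digraph $D$, an integer $k$, and a candidate set $S$ with $|S|\le k$, one can compute the strong components of $D-S$ and of $D[S]$ in polynomial time (e.g.\ by Tarjan's algorithm), form the condensation, and then check conditions (i) and (ii) directly against the arcs running between components. Hence a safe set of size at most $k$ is a polynomially checkable certificate, and the problem is in NP.

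For hardness I would reuse verbatim the traceable acyclic digraph $D$ built in the proof of the previous theorem. Since $D$ is acyclic, every strong component of $D-S$ is a single vertex, so condition (ii) is vacuous (any $N$ is non-empty, hence $|M|=1\le|N|$) and condition (i) says exactly that every vertex outside $S$ has an arc into $S$; that is, $S$ is a safe set of $D$ if and only if $S$ is an in-dominating set of $D$. Consequently $D$ has a safe set of size $n+1$ if and only if it has an in-dominating set of size $n+1$, which by the previous theorem holds if and only if the underlying traceable {\sc $4$-SAT} instance is satisfiable. This is the required polynomial reduction from traceable {\sc $4$-SAT} to {\sc safe set} on traceable acyclic digraphs, so the problem is NP-hard, and combined with the NP membership above it is NP-complete. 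The one caveat worth checking is that the equivalence genuinely uses no cyclic structure, which is immediate from the single-vertex strong components of an acyclic digraph; beyond this bookkeeping there is no real obstacle.
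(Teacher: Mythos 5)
Your proof is correct and is exactly the argument the paper intends: the corollary is an immediate consequence of the preceding theorem via the equivalence, already recorded in the paper, between safe sets and in-dominating sets in acyclic digraphs (strong components of an acyclic digraph are singletons, so condition (ii) is vacuous and condition (i) reduces to in-domination). Your explicit check of NP membership is a reasonable piece of bookkeeping that the paper leaves implicit.
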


\section{Algorithmic aspects of {\sc safe set} for  tournaments}\label{Talg}

\2

\begin{theorem}\label{thm:NPhard_tournament}
  {\sc Safe Set} is NP-complete for tournaments.
\end{theorem}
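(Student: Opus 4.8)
The plan is to show membership in NP and then give a polynomial reduction. Membership is routine: given a candidate $S$, compute the strong components of both $D[S]$ and $D-S$ in linear time, verify condition~(i) by checking that every component of $D-S$ sends at least one arc into $S$, and verify condition~(ii) by comparing the sizes of the finitely many relevant component pairs.

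The backbone of the hardness proof is an elementary observation special to tournaments. If $S$ is a safe set of a tournament $T$ and every strong component of $T[S]$ has size less than $L$, then every strong component of $T-S$ also has size less than $L$: given a component $M$ of $T-S$, condition~(i) produces a component $N$ of $T[S]$ with an arc $M\to N$, and then condition~(ii) forces $|M|\le |N|<L$. In particular, if we only look for safe sets of size at most a bound $k'<L$, then automatically every component of $T[S]$ has size at most $|S|<L$, so such a safe set exists only if deleting $S$ leaves all strong components of $T-S$ of size less than $L$. Thus {\sc safe set} on tournaments is tightly linked to the problem of deleting few vertices so as to shatter every large strong component, and I would reduce {\sc vertex cover} (one may equally start from the {\sc set cover} instances of Section~\ref{complsec}) to it.

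Starting from a graph $G=(V,E)$, I would introduce a \emph{selector} vertex $t_u$ for each $u\in V$ and, for each edge $e=uv\in E$, an \emph{edge gadget} $E_e$ consisting of a transitive tournament $a^e_1\to a^e_2\to\cdots\to a^e_L$ of size $L$ whose only ``return'' route runs through the selectors of its endpoints, via the arcs $a^e_L\to t_u\to t_v\to a^e_1$ (fixing an orientation $u\to v$ of each edge coming from a linear order on $V$). All remaining pairs are oriented consistently with a global acyclic scaffold that agrees with this orientation on the selectors, keeps each $E_e$ internally transitive, and places a fixed small \emph{anchor} set of lowest vertices into $S$. The intended behaviour is: with both $t_u,t_v$ present, $E_e\cup\{t_u,t_v\}$ is one strong component of size $L+2$; deleting either selector breaks the return route, so $E_e$ splits into singletons. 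Taking $L>k'$ and $k'$ equal to $k$ plus the size of the anchor, a vertex cover $C$ of size $k$ yields the safe set $S=\{t_u:u\in C\}$ together with the anchor; conversely, since everything in $T-S$ points down into the small anchor component, any surviving gadget of size $L+2$ would violate condition~(ii), so by the backbone observation a safe set of size at most $k'$ must shatter every $E_e$, i.e.\ hit every edge, and its selectors form a vertex cover.

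The step I expect to be the main obstacle is the global orientation. Because $T$ is complete, every gadget is adjacent to every other and to every selector, so I must orient all the ``filler'' arcs so that the \emph{only} strong components of size at least $L$ that can ever arise are the intended $E_e\cup\{t_u,t_v\}$. In particular I must forbid spurious large cycles weaving through several gadgets and their shared selectors, and I must guarantee that in the shattered (covered) state the small components of $T-S$ never point to a strictly smaller component of $T[S]$ and so never re-trigger condition~(ii). Producing a single acyclic scaffold that simultaneously realizes all return routes, keeps the selectors mutually transitive, and makes the size bookkeeping exact in both directions is where essentially all the work lies; once it is in place, both directions of the reduction follow from the backbone observation as sketched.
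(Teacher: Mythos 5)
Your NP membership argument and your ``backbone observation'' are both correct, and the latter is exactly the mechanism the paper exploits: once every strong component of $T[S]$ is small (in the paper's case, a singleton), condition (ii) forces every strong component of $T-S$ to be small as well, so a small safe set must destroy all large strong components. However, your reduction has a genuine gap, and you name it yourself: you never construct the ``global acyclic scaffold'' that orients all filler arcs so that the only large strong components which can arise are the intended sets $E_e\cup\{t_u,t_v\}$. In a tournament every gadget vertex is adjacent to every selector and to every other gadget, and each gadget needs return arcs pointing against the global order through \emph{shared} selectors; ruling out spurious cycles that weave through several gadgets via these shared back-arcs (e.g.\ leaving $E_e$ through $t_u\to t_v$ and re-entering a different gadget incident to $v$) is precisely the combinatorial content of the hardness proof, and it is left unproved. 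As it stands the reduction is a plan rather than a proof: neither direction of the equivalence can be checked without the explicit orientation, and it is not evident that an orientation with all the claimed properties exists.

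The paper avoids this difficulty entirely by reducing from {\sc Feedback vertex set} on tournaments, which is already known to be NP-complete \cite{speckenmeyerLNCS411}: given a tournament $T$ and budget $k$, it adds a single new vertex $x$ with all arcs from $V(T)$ into $x$. Every safe set $S$ of the resulting tournament must contain $x$, whose strong component in the safe set is a singleton to which every strong component of the complement has an arc; by your own backbone observation this forces every strong component outside $S$ to be a singleton, i.e.\ $S\setminus\{x\}$ is a feedback vertex set of $T$, and conversely a feedback vertex set together with $x$ is a safe set of size $k+1$. If you want to keep a self-contained route from {\sc vertex cover}, the scaffold you are missing is essentially the construction of \cite{speckenmeyerLNCS411}, so you would in effect be re-proving that result; it is cleaner to cite it and use the one-vertex gadget.
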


\JBJ{
  \begin{proof}
    Recall that a {\it feedback vertex set} in a digraph $D=(V,A)$ is a subset $X\subseteq V$ such that $\induce{D}{V-X}$ is acyclic.
    The {\sc Feedback vertex set} problem asks for a given digraph $D$ and a natural number $k$ whether $D$ has a feedback vertex set of size at most $k$. This problem is on Karp's list of NP-complete problems from 1972
    \cite{karp1972} and was shown to be NP-complete already for tournaments in
    \cite{speckenmeyerLNCS411} (see also \cite{bangSJDM5}). Given an instance \AY{$(T,k)$} of {\sc Feedback vertex set} for tournaments. We construct a new tournament $T'$ by adding a new vertex $x$ and all possible arcs from  \AY{$V(T)$} to $x$. The tournament $T'$ has at least two strong components and every safe set of $T'$ must contain the vertex $x$ and a subset $W\subset V$ such that $\induce{T}{V-W}$ is acyclic. This follows from the fact that every vertex of $V$ has an arc to $x$ in $T'$. Thus $T'$ has a safe set of size $k+1$ if and only if $T$ has a feedback vertex set of size $k$, implying that {\sc Safe Set} is NP-complete for tournaments.
    \end{proof}

  }

  \JBJ{Let  ${\cal T}(f(n))$  be the class of tournaments, $T$, satisfying that  the size of the largest strong component of $T$ is at most $f(|T|)$.}

Using the so-called sparsification lemma, the statement of ETH (the Exponential Time Hypothesis) is often formulated as follows (see e.g. Theorem 14.4 in \cite{cygan2016}):

\2

{\bf Exponential Time Hypothesis (ETH):} {\em Unless ETH fails, there is a constant $\delta > 0$, such that no algorithm solves \AY{{\sc 3-SAT}} in 
time $O(2^{\delta (n+m)})$, where $n$ denotes the number of variables and $m$ the number of clauses. }

\begin{theorem}\label{thm:small_strong_comp_ETH_tournament}
  If ETH holds and $\varepsilon >0$, then there is no algorithm that for every  $T \in {\cal T}\left(\log^{(1+\varepsilon)}(n) \right)$ will  
determine $s(T)$ in polynomial time.
\end{theorem}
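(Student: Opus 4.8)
The plan is to argue by contraposition through a padding reduction from \AY{{\sc 3-SAT}}: assuming a polynomial-time algorithm ${\cal A}$ that computes $s(T)$ for every $T\in{\cal T}(\log^{1+\varepsilon}(n))$, I would decide an arbitrary \AY{{\sc 3-SAT}} instance ${\cal F}$ of size $s=n+m$ in time $2^{o(s)}$, contradicting ETH. First I would fix a reduction that turns ${\cal F}$ into a tournament $H$ (a {\sc Safe Set} instance) such that $s(H)$ decides satisfiability of ${\cal F}$, $H$ has $O(s)$ vertices, and every strong component of $H$ has order at most $\ell=O(s)$; here I would reuse the feedback-vertex-set gadget from the proof of Theorem~\ref{thm:NPhard_tournament} but keep the ``hard core'' inside a single strong component of (near-)linear order. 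Write $x$ for the forced global sink of $H$ (the analogue of the added vertex in Theorem~\ref{thm:NPhard_tournament}), which lies in every safe set since it has out-degree $0$.

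Next I would pad. Given $\varepsilon$, set $N=\lceil 2^{\ell^{1/(1+\varepsilon)}}\rceil$ and let $T^{*}$ be obtained from $H$ by adding a transitive tournament $D$ on $N-|H|$ new vertices, orienting every arc between $D$ and $H$ from $D$ to $H$ (so the dummies are sources and $x$ stays the global sink). Two observations make this padding harmless. Since the arcs between $D$ and $H$ go only from $D$ to $H$, and $D$ is acyclic, no strong component of $T^{*}$ meets both $D$ and $H$; hence the strong components of $T^{*}$ are exactly those of $H$ together with singletons, the largest one still has order $\ell\le\log^{1+\varepsilon}(N)$, and $T^{*}\in{\cal T}(\log^{1+\varepsilon}(n))$. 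Moreover, for any safe set $S$ with $x\in S$, every dummy forms a singleton strong component of $T^{*}-S$ with an arc to $x\in S$, so condition~(i) holds and condition~(ii) is automatic for dummies; conversely no strong component of $H-(S\cap H)$ has an arc into $D$, so the constraints coming from $H$ are unchanged and placing dummies in $S$ never helps. Thus $s(T^{*})=s(H)$, and running ${\cal A}$ on $T^{*}$ decides ${\cal F}$.

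Finally I would run the ETH count. Building $T^{*}$ takes time polynomial in $N$, and ${\cal A}$ runs in time $N^{c}=2^{c\log N}=2^{c\,\ell^{1/(1+\varepsilon)}}=2^{O(s^{1/(1+\varepsilon)})}=2^{o(s)}$, because $1/(1+\varepsilon)<1$. Hence ${\cal F}$ would be decided in total time $2^{o(s)}$, contradicting ETH.

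The hard part will be the very first step: producing a reduction whose hardness is confined to a single strong component of order $O(s)$ \emph{and} which resists any subexponential speed-up inside that component. This is precisely the regime where the polynomial dynamic program for ${\cal T}(c\log n)$ just breaks down, so the gadget must force essentially $2^{\Theta(\ell)}$ distinct relevant choices inside the core component. A reduction that merely routes through feedback vertex set in tournaments is not enough for small $\varepsilon$: the standard reduction from \AY{{\sc 3-SAT}} to that problem is superlinear, and feedback vertex set in tournaments is known to admit subexponential-time parameterized algorithms, so either feature alone would only yield the bound for $\varepsilon>1$. Overcoming this requires encoding the clauses through the size-comparison condition~(ii) of safe sets, keeping the blow-up (near-)linear so that no subexponential algorithm for the core exists under ETH.
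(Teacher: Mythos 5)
Your padding construction and the closing ETH arithmetic are essentially the paper's: the paper likewise sets $N=\lceil 2^{(n')^{1/(1+\varepsilon)}}\rceil$ for a tournament instance $T'$ on $n'$ vertices, pads with a transitive tournament so that the largest strong component of the padded tournament still has order at most $n'\approx\log^{1+\varepsilon}(N)$, and converts a hypothetical $O(N^c)$ algorithm into a $2^{O((n+m)^{1/(1+\varepsilon)})}=2^{o(n+m)}$ algorithm for {\sc 3-SAT}. (You orient the padding arcs the opposite way --- dummies as sources rather than sinks --- which also works.)

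The genuine gap is the step you explicitly defer: you never produce the linear-size reduction from {\sc 3-SAT}, and both reasons you give for believing the obvious route fails are incorrect. First, the chain {\sc 3-SAT} $\leq_p$ {\sc Clique} $\leq_p$ {\sc Vertex Cover} $\leq_p$ {\sc Feedback Vertex Set} in tournaments is linear, not superlinear: an instance with $n$ variables and $m$ clauses becomes a vertex-cover instance on $3m$ vertices with parameter $2m$, and Speckenmeyer's reduction triples the vertex count, giving a tournament $T'$ on $n'=9m=O(n+m)$ vertices. This is exactly what the paper uses, and it already yields the statement for \emph{every} $\varepsilon>0$, since $c\cdot 9^{1/(1+\varepsilon)}\cdot(n+m)^{1/(1+\varepsilon)}<\delta(n+m)$ once $m$ is large. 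Second, the existence (real or supposed) of subexponential \emph{parameterized} algorithms for feedback vertex set in tournaments is irrelevant: the contradiction is obtained by deciding the original {\sc 3-SAT} instance too fast, not by violating any hardness of the intermediate problem, so only the size of the composite reduction matters; in fact the same linear reduction from {\sc Vertex Cover} shows that under ETH feedback vertex set in tournaments admits no $2^{o(k)}$ algorithm either. Third, nothing requires the hardness to be ``confined to a single strong component'' or the gadget to encode $2^{\Theta(\ell)}$ choices inside one component: membership of the padded tournament in ${\cal T}\left(\log^{1+\varepsilon}(N)\right)$ only needs its largest strong component to have order at most $n'$, which is automatic because all of $T'$ has only $n'$ vertices. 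So the theorem follows by precisely the route you discarded, and as written your argument is missing its first and essential step.
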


\JBJ{
  \begin{proof}
    Recall that {\sc Clique} is the problem of deciding, for input $G$ and $k$, whether the undirected graph $G$ has a complete subgraph of size $k$ and that {\sc Vertex  Cover} is the problem of deciding, for input $G=(V,E)$ and $k$, whether the undirected graph $G$ has a subset
$X\subseteq V$ such that every edge $e\in V$ has at least one of its end vertices in $X$.
 Now consider the following chain of polynomial reductions
    {\sc 3-SAT} $\leq_p$ {\sc Clique} $\leq_p$ {\sc Vertex  Cover}, see e.g. \cite[Section 34.5]{cormen2009}.
  Through these two reductions an instance $\cal F$ of {\sc 3-SAT} with $n$ variables and $m$ clauses is transformed into an instance\AY{, $G$,} of {\sc Vertex  Cover} with $3m$ vertices so that $G$ has a vertex cover of size at most $2m$ if and only if $\cal F$ is satisfiable. In \cite{speckenmeyerLNCS411} a polynomial reduction from {\sc Vertex  Cover} to {\sc Feedback vertex set} for tournaments is given. In this reduction an instance of {\sc vertex cover} with $n$ vertices and parameter $k$ is transformed into an instance of {\sc Feedback vertex set} for tournaments with $3n$ vertices and parameter $k$. Putting this together with the reductions above, we have a polynomial reduction from {\sc 3-SAT} to 
  {\sc Feedback vertex set} for tournaments which transforms an instance $\cal F$ of \AY{{\sc 3-SAT}} with $n$ variables and $m$ clauses into an instance $T$ of {\sc Feedback vertex set} for tournaments on $9m$ vertices such that $T$ has a feedback vertex set of size $2m$ if and only if $\cal F$ is \AY{satisfiable.


Now} let $\epsilon >0$ be given and suppose that there is an  algorithm $\cal A$ that for every  $T \in {\cal T}\left(\log^{(1+\varepsilon)}(n) \right)$  
  determines $s(T)$ in polynomial time.  Let $\cal F$ be an instance of {\sc 3-SAT} with $n$ variables and \AY{$m$} clauses and let $T'$ be the instance of {\sc Feedback vertex set} for tournaments on $n'=9m$ vertices that we obtain by the sequence of transformations above. 
 Let \AY{$N= \lceil 2^{(n')^{\frac{1}{1+\epsilon}}} \rceil$, which implies that $n'$ is very close to (and at most) $\log^{1+\epsilon}(N)$.}

  Now we construct a new (non-strong) tournament $T$ from $T'$ by adding the vertices of a transitive tournament $T^*$ on
  $N-n'$ vertices and adding all possible arcs from $V(T')$ to $V(T^*)$. As in the proof of Theorem \ref{thm:NPhard_tournament}, $T$ has a safe set of size $k+1$ if and only if $T'$ has a feedback vertex set of size $k$. Hence by performing the transformation of $\cal F$ to $T'$ and then to \AY{$T^*$} we can solve $\cal F$ by checking whether \AY{$T^*$} has a safe set of size at most $2m+1$. 
Thus if  $\cal A$ exists and has running time $O(N^c)$ for some constant $c$, then we can solve \AY{{\sc 3-SAT}} in time 
$\AY{O((2^{(n')^{\frac{1}{1+\epsilon}}})^c)=O(2^{c\cdot{}(n')^{\frac{1}{1+\epsilon}}})}$.\\

  Now using that \AY{$n'=9m$} 
 we obtain the following.

\AY{
  \begin{eqnarray*}
    c\cdot{}(n')^{\frac{1}{1+\epsilon}}&=& c\cdot{}(9m)^{\frac{1}{1+\epsilon}}\\
                                     &=& c\cdot{}9^{\frac{1}{1+\epsilon}}\cdot{} m^{\frac{1}{1+\epsilon}}\\
    &<& c\cdot{}9^{\frac{1}{1+\epsilon}}\cdot{}(n+m)^{\frac{1}{1+\epsilon}}\\  
  \end{eqnarray*}

  Given any $\delta >0$ we note that when $m$ gets large enough we have $c\cdot{}9^{\frac{1}{1+\epsilon}}\cdot{}(n+m)^{\frac{1}{1+\epsilon}} < \delta (n+m)$. 
Therefore  the algorithm we designed above solves  {\sc 3-SAT} faster than $O(2^{\delta{}(n+m)})$, contradicting ETH. }
    \end{proof}

 }

As the proof for the following theorem is identical if we consider tournaments or semicomplete digraphs\footnote{A digraph $D$ is semicomplete if there is at least one arc between any pair of distinct vertices.}, we will prove it for the slightly 
larger class of semicomplete digraphs. Let $lsc(D)$ {denote the size of} a largest strong component of a semicomplete digraph $D$.

\begin{theorem}\label{thm:small_strong_comp_poly_tournament}  

There exists an algorithm that finds a smallest safe set in semicomplete digraphs, $D$ \JBJ{in time} 
$O(|V(D)|^2  + |V(D)| \times lsc(D) \times 2^{lsc(D)})$.
\end{theorem}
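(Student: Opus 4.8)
The plan is to exploit the fact that the condensation of a semicomplete digraph is a transitive tournament, so that the inter-block interaction relevant to the safe-set conditions collapses to a single scalar, which a dynamic program over the strong components can track.

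First I would establish the structural backbone. Let $M_1,\ldots,M_t$ be the strong components of $D$. Since $D$ is semicomplete, between any two distinct strong components all arcs point the same way (a $2$-cycle between them would merge them), so the condensation is an acyclic semicomplete digraph, i.e. a transitive tournament; relabel so that $M_i \Rightarrow M_j$ (complete domination) for all $i<j$. Computing this decomposition and ordering costs $O(|V(D)|^2)$, giving the first term of the bound. Two consequences follow. Every strong component of $D-S$ and of $D[S]$ lies inside a single block $M_i$ (a component spanning two blocks would force a backward path in $D$). And for a component $M$ of $D-S$ inside block $a$, the components $N$ of $D[S]$ with an arc $M\to N$ are exactly (a) the same-block components inside $M_a$ to which $M$ has an internal arc, together with (b) all components of $D[S]$ in blocks $b>a$, which receive an arc from $M$ automatically by forward domination.

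The key point is that, by (b), the only way in which blocks couple is through the quantity $\mu=$ the minimum size of a strong component of $D[S]$ lying in a block strictly later than $a$. For a component $M$ of $D-S$ in block $a$, condition (ii) becomes $|M|\le |N|$ for every same-block $N$ it dominates (an intrinsic property of $S\cap M_a$) together with $|M|\le\mu$ whenever $\mu<\infty$; condition (i) becomes ``$\mu<\infty$, or $M$ dominates some same-block $N$''. Hence I would run a dynamic program over the blocks processed from $M_t$ down to $M_1$, carrying the single scalar state $\mu\in\{1,\ldots,lsc(D)\}\cup\{+\infty\}$ equal to the minimum strong-component size of $S$ among the already-processed (later) blocks. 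For block $M_i$ I enumerate all $2^{|M_i|}$ candidate subsets $S_i=S\cap M_i$; for each I precompute, from the strong components of $D[S_i]$ and of $D[M_i\setminus S_i]$ and the arcs between them (time $O(|M_i|^2)$): the size $|S_i|$, the minimum component size $\sigma(S_i)$ of $D[S_i]$, the maximum component size of $D[M_i\setminus S_i]$, and two booleans recording whether all same-block (ii)-constraints hold and whether every component of $D-S$ in $M_i$ dominates some component of $D[S_i]$. With these, each transition against a previous state value $\mu$ is an $O(1)$ check, it updates the state to $\min(\sigma(S_i),\mu)$ and adds $|S_i|$ to the cost; the optimum safe-set size is the best feasible cost over nonempty $S$ (feasibility is guaranteed since $S=V(D)$ is always safe).

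Finally I would verify the running time. The state space has only $O(lsc(D))$ values, so block $M_i$ costs $O(2^{|M_i|}(|M_i|^2+lsc(D)))$. Using $|M_i|\le lsc(D)$ and $\sum_i|M_i|=|V(D)|$, the sum telescopes: $\sum_i 2^{|M_i|}|M_i|^2 \le lsc(D)\sum_i|M_i|\,2^{|M_i|}\le lsc(D)\,2^{lsc(D)}\sum_i|M_i| = |V(D)|\,lsc(D)\,2^{lsc(D)}$, and the state-space contribution $\sum_i 2^{|M_i|}lsc(D)$ is bounded the same way, so together with the $O(|V(D)|^2)$ decomposition step this yields the claimed bound. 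I expect the main obstacle to be the careful casework establishing correctness of the per-block transition, in particular verifying that the scalar $\mu$ really captures every inter-block constraint (that conditions (i) and (ii) for a component of $D-S$ depend on later blocks only through this forward minimum) and that components of $D[S]$ impose no constraints among themselves, rather than the routine time accounting.
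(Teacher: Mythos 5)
Your proposal is correct and is essentially the paper's own proof: the paper runs the same dynamic program over the strong components in condensation order, with its table index $b$ in $S^*(a,b)$ (the minimum strong-component size of the safe set restricted to blocks $\geq a$) playing exactly the role of your scalar state $\mu$, and the same per-block enumeration of all $2^{|M_i|}$ subsets with the same admissibility checks ($t_W\le j$ and no same-block violation). Your time accounting is in fact slightly more careful than the paper's, since you explicitly charge the $O(|M_i|^2)$ per-subset precomputation and show it still fits within the claimed bound.
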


\begin{corollary}
For every $c>0$, there exists a polynomial time algorithm that can determines $s(T)$ when $T \in {\cal T}(c \log (n))$.
\end{corollary}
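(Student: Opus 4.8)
The plan is to exploit the rigid structure of the strong components of a semicomplete digraph. First I would compute the strong components $M_1,\ldots,M_t$ of $D$ and recall the standard fact that, since $D$ is semicomplete, between any two of them all arcs point the same way; hence the components admit a unique acyclic ordering $M_1\dom M_2\dom\cdots\dom M_t$ in which $M_i\dom M_j$ (every arc present and directed forward) whenever $i<j$. The crucial consequence is that no directed cycle of $D$ can meet two distinct $M_i$'s, so for any $S$ every strong component of $D[S]$ and every strong component of $D-S$ is contained in a single $M_i$; writing $S_i=S\cap M_i$, the strong components of $D[S]$ are exactly those of the digraphs $D[S_i]$, and the strong components of $D-S$ are exactly those of the digraphs $\induce{D}{M_i\setminus S_i}$.

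Next I would translate conditions (i) and (ii) into local conditions along this ordering. Fix a strong component $M$ of $D-S$ lying in $M_i$. Because all arcs go forward, $M$ has an arc to every strong component of $D[S]$ lying in some $M_j$ with $j>i$, while it has no arc at all to any component in an earlier $M_h$ (with $h<i$); inside $M_i$ it reaches exactly those $D[S]$-components determined by the induced arcs. Let $\beta_i$ be the minimum size of a strong component of $D[S]$ contained in some $M_j$ with $j>i$ (and $\beta_i=+\infty$ if no such $M_j$ meets $S$). Condition (ii) for $M$ then becomes $|M|\le\beta_i$ together with $|M|\le|N|$ for the finitely many $D[S]$-components $N\subseteq S_i$ that $M$ reaches inside $M_i$; condition (i) for $M$ becomes ``$\beta_i<+\infty$, or $M$ has an arc to some $D[S]$-component inside $M_i$''. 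The point is that all interaction between different full components is funnelled through the single scalar $\beta_i$, and, since no backward arcs exist, choices made in earlier full components never affect the validity of the conditions for later ones.

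This suggests a dynamic program that processes $M_t,M_{t-1},\ldots,M_1$ while carrying $\beta$ as the only state. For the current component I would enumerate all $2^{|M_i|}$ subsets $S_i\subseteq M_i$; for each, compute in $O(|M_i|^2)$ time the strong components of $D[S_i]$ and of $\induce{D}{M_i\setminus S_i}$, the minimum $D[S_i]$-component size $m_i$, the largest $D-S$-component size $B_i$, whether the within-$M_i$ size constraints hold, and whether some $D-S$-component has no within-$M_i$ out-arc to $S_i$. Given the stored value $\beta$, feasibility of $S_i$ is then an $O(1)$ test ($\beta\ge B_i$, the within constraints hold, and $\beta<+\infty$ whenever some $D-S$-component needs a later dominator), after which the state updates to $\min(\beta,m_i)$ and the cost increases by $|S_i|$. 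Keeping, for each value of $\beta\in\{1,\ldots,lsc(D)\}\cup\{+\infty\}$, the minimum weight found so far, and finally reading off the best entry with $\beta<+\infty$ (to forbid $S=\emptyset$), yields $s(D)$. The initial strong-component computation costs $O(|V(D)|^2)$, and the per-component work sums to $\sum_i 2^{|M_i|}|M_i|^2 \le 2^{lsc(D)}\cdot lsc(D)\cdot\sum_i|M_i| = O(|V(D)|\cdot lsc(D)\cdot 2^{lsc(D)})$, matching the claimed bound.

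I expect the main obstacle to be the rigorous verification that the scalar $\beta$ is a sufficient state, that is, proving that earlier full components can never retroactively violate condition (i) or (ii) for components already processed, and correctly handling condition (i) at the last full component meeting $S$, where $\beta=+\infty$ forces every remaining $D-S$-component to be dominated from within its own $M_i$. The within-component bookkeeping and the running-time arithmetic are routine by comparison.
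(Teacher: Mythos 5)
Your proof is correct and takes essentially the same route as the paper: the paper obtains this corollary in one line from Theorem~\ref{thm:small_strong_comp_poly_tournament} by substituting $lsc(T)\le c\log(n)$ so that $2^{lsc(T)}\le n^{c}$, and the dynamic program you describe---processing the strong components in their acyclic order with the minimum size of an already-selected strong component of $D[S]$ as the only carried state---is precisely the paper's proof of that theorem, with the same subset enumeration and the same running-time accounting. The only step you leave implicit is the final observation that $O(n^{2}+n\cdot c\log(n)\cdot n^{c})$ is polynomial for fixed $c$, which is immediate.
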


\begin{proof}
 Let $T \in  {\cal T}(c \log (n))$  and let $lsc(T)$ be the size of largest strong component of $T$.
By the definition of ${\cal T}(c \log (n))$ we note that $lsc(T) \leq c \log (n)$. By 
Theorem~\ref{thm:small_strong_comp_poly_tournament}  there therefore exists a 
$O(|V(T)|^2  + |V(T)| \times lsc(T) \times 2^{lsc(T)}) = O(n^2  + n \times c \log(n) \times 2^{c \log (n)}) = O(n^2 + c \log(n) \times n^{1+c})$ algorithm,
which is polynomial for any constant $c$.
\end{proof}



\subsection{Proof of Theorem~\ref{thm:small_strong_comp_poly_tournament}}

Let $lsc(D)$ be the size of the largest strong component of a semicomplete digraph $D$. We will now give a 
$O(|V(D)|^2  + |V(D)| \times lsc(D) \times 2^{lsc(D)})$ algorithm that determines $s(D)$.
Let $C_1,C_2,\ldots, C_p$ be the strong components of $D$ and without loss of generality assume that all arcs between $C_i$ and $C_j$ go from $C_i$ to $C_j$ if and only if $i<j$. 
We will now determine $s(D)$ using dynamic programming.

Let $S^*(a,b)$ determine a smallest set of vertices from $V_a = \cup_{i=a}^p V(C_i)$ that forms a safe set in $\induce{D}{V_a}$ such 
that the smallest strong component in $\induce{D}{S^*(a,b)}$ has size $b$. Let $s^*(a,b) = |S^*(a,b)|$. We will now illustrate
how to compute $S^*(a,b)$ for all $a=p,p-1,p-2,\ldots,1$ (in that order) and $b\in \{1,2,\ldots,lsc(D)\}$.

We first compute $S^*(p,b)$ for all $b\in \{1,2,\ldots,lsc(D)\}$. Initially let $s^*(p,b)= \infty$ (and $S^*(p,b)$ is undefined).
We now consider all subsets $W \subseteq V(C_p)$ and check if $W$ is a safe set for
$C_p$. If it is a safe set, then let the size of the smallest strong component in \induce{C_p}{W} be $q$ and if $|W| < s^*(p,q)$
then let $s^*(p,q) = |W|$ and let $S^*(p,q)=W$.  Once this is done for all $W$ we have the correct values for $s^*(p,b)$ 
(and the correct sets $S^*(p,q)$) for all $b\in \{1,2,\ldots,lsc(D)\}$. If we consider the digraph in Figure~\ref{fig:D},
then we would obtain the following values (note that there is no safe set in $C_4$ where the smallest strong component in 
the safe set has size $2$ or $5$).

\begin{figure}[hbtp]
\begin{center}
\tikzstyle{vertexX}=[circle,draw, top color=gray!5, bottom color=gray!30, minimum size=16pt, scale=0.7, inner sep=0.5pt]
\tikzstyle{vertexY}=[circle,draw, top color=gray!5, bottom color=gray!30, minimum size=20pt, scale=0.7, inner sep=1.5pt]
\tikzstyle{vertexBIG}=[ellipse, draw, scale=0.6, inner sep=3.5pt]
\begin{tikzpicture}[scale=0.47]
\node (a1) at (3.0,5.0) [vertexX] {$a_1$};
\node (a2) at (1.0,2.0) [vertexX] {$a_2$};
\node (a3) at (5.0,2.0) [vertexX] {$a_3$};
\draw (3,0.5) node {$C_1$};
\node (b1) at (12.5,5.0) [vertexX] {$b_1$};
\node (b2) at  (8.0,2.0) [vertexX] {$b_2$};
\node (b3) at (11.0,2.0) [vertexX] {$b_3$};
\node (b4) at (14.0,2.0) [vertexX] {$b_4$};
\node (b5) at (17.0,2.0) [vertexX] {$b_5$};
\draw (12.5,0.5) node {$C_2$};
\node (c1) at (  22,5.0) [vertexX] {$c_1$};
\node (c2) at (20.0,2.0) [vertexX] {$c_2$};
\node (c3) at (24.0,2.0) [vertexX] {$c_3$};
\draw (22,0.5) node {$C_3$};
\node (d1) at (27.0,5.0) [vertexX] {$d_1$};
\node (d2) at (27.0,2.0) [vertexX] {$d_2$};
\node (d3) at (30.0,2.0) [vertexX] {$d_3$};
\node (d4) at (30.0,5.0) [vertexX] {$d_4$};
\draw (28.5,0.5) node {$C_4$};


\draw [->, line width=0.03cm] (a1) -- (a2);
\draw [->, line width=0.03cm] (a2) -- (a3);
\draw [->, line width=0.03cm] (a3) -- (a1);

\draw [->, line width=0.03cm] (b2) to [out=10, in=170] (b3);
\draw [->, line width=0.03cm] (b3) to [out=10, in=170] (b4);
\draw [->, line width=0.03cm] (b4) to [out=10, in=170] (b5);
\draw [->, line width=0.03cm] (b2) to [out=330, in=210] (b4);
\draw [->, line width=0.03cm] (b3) to [out=330, in=210] (b5);
\draw [->, line width=0.03cm] (b2) to [out=300, in=240] (b5);
\draw [->, line width=0.03cm] (b5) -- (b1);
\draw [->, line width=0.03cm] (b4) -- (b1);
\draw [->, line width=0.03cm] (b1) -- (b2);
\draw [->, line width=0.03cm] (b1) -- (b3);

\draw [->, line width=0.03cm] (c1) -- (c2);
\draw [->, line width=0.03cm] (c3) -- (c1);
\draw [->, line width=0.03cm] (c2) to [out=30, in=150] (c3);
\draw [->, line width=0.03cm] (c3) to [out=210, in=330] (c2);

\draw [->, line width=0.03cm] (d1) -- (d2);
\draw [->, line width=0.03cm] (d2) -- (d3);
\draw [->, line width=0.03cm] (d3) -- (d4);
\draw [->, line width=0.03cm] (d4) -- (d1);
\draw [->, line width=0.03cm] (d1) -- (d3);
\draw [->, line width=0.03cm] (d2) -- (d4);

\end{tikzpicture}
\end{center}
\caption{A semicomplete digraph, $D$, where the arcs between distinct $C_i$'s are not shown, but go from $C_i$ to $C_j$ if and only
if $i<j$. Note that $lsc(D)=5$.}
\label{fig:D}
\end{figure}
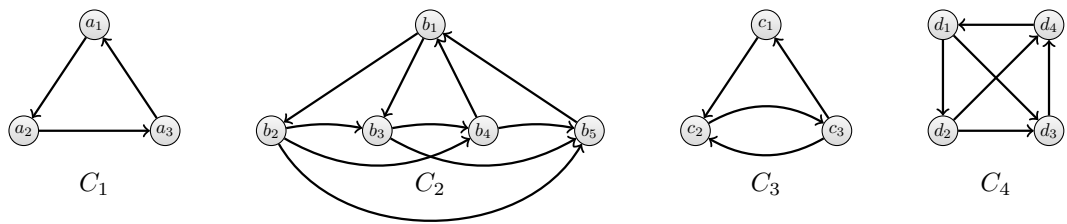

\begin{center}
\begin{tabular}{|c||c|c|} \hline
$b$ & $s^*(4,b)$ & $S^*(4,b)$ \\ \hline \hline
1 & $2$ & $\{d_3,d_4\}$ \\
2 & $\infty$ & - \\
3 & $3$ & $\{d_1,d_2,d_4\}$ \\
4 & $4$ & $\{d_1,d_2,d_3,d_4\}$ \\
5 & $\infty$ & - \\ \hline
\end{tabular}
\end{center}

We now compute $S^*(i,b)$ for all $i=p-1,p-2,\ldots,1$ (in that order) and $b\in \{1,2,\ldots,lsc(D)\}$. 
So when we want to compute $S^*(i,b)$ we may assume that $S^*(i+1,b')$ is known for all $b' \in \{1,2,\ldots,lsc(D)\}$.
Initially let $s^*(i,b)= \infty$ (and $S^*(i,b)$ is undefined).
We now consider all subsets $W \subseteq V(C_i)$, where there is no strong component in $C_i \setminus W$ which has an arc into 
a smaller strong component of \induce{D}{W}. For each such $W$ let $s_W$ be the \JBJ{size of a smallest} strong component in \induce{D}{W} and
let $t_W$ be the \JBJ{size of a largest} strong component in \induce{D}{V(C_i) \setminus W}. For all $j=t_W,t_W+1,\ldots,lsc(D)$ we now perform
the following operation.

\begin{itemize}
\item If $|W|+s^*(i+1,j) < s^*(i,min(j,s_W))$ then,
\begin{description}
\item[(i):]  $S^*(i,min(j,s_W)) = W \cup S^*(i+1,j)$ and 
\item[(ii):]  $s^*(i,min(j,s_W)) = |S^*(i,min(j,s_W))|$.
\end{description}
\end{itemize}

Once this operation is performed for all admissible $W$ we will have the correct values of $S^*(i,b)$ for all  $b\in \{1,2,\ldots,lsc(D)\}$.
This is the case because if we take the union of $W$ and $S^*(i+1,j)$ then the smallest strong component in the resulting set will be the minimum
of $j$ (the smallest strong component in $S^*(i+1,j)$) and $s_W$ (the smallest strong component in $W$). Also the resulting set will be a safe
set in \induce{D}{V(C_i) \cup V(C_{i+1}) \cup \cdots \cup V(C_p)} as  $S^*(i+1,j)$ is a safe set in
\induce{D}{V(C_{i+1}) \cup V(C_{i+2}) \cup \cdots \cup V(C_p)} and every strong component in $C_i \setminus W$ has an arc to all vertices in 
$S^*(i+1,j)$ and is smaller than all strong components in $S^*(i+1,j)$ and it is also smaller than all strong components in $W$ to which it has 
an arc. This proves the correctness of the above procedure.
Again considering the semicomplete digraph in Figure~\ref{fig:D} (and recalling the values $s^*(4,b)$) we would obtain the following values.

\begin{center}
\begin{tabular}{|c||c|c|l} \cline{1-3}
$b$ & $s^*(3,b)$ & $S^*(3,b)$ & \\ \cline{1-3}
1 & $3$ & $\{c_2,d_3,d_4\}$ & Obtained as $\{c_2\} \cup S^*(4,1)$. \\
2 & $5$ & $\{c_2,c_3,d_1,d_2,d_4\}$ & Obtained as $\{c_2,c_3\} \cup S^*(4,3)$. \\
3 & $3$ & $\{d_1,d_2,d_4\}$ & Obtained as $\emptyset \cup S^*(4,3)$.  \\
4 & $4$ & $\{d_1,d_2,d_3,d_4\}$ & Obtained as $\emptyset \cup S^*(4,4)$. \\
5 & $\infty$ & - & \\ \cline{1-3}
\end{tabular}
\end{center}

Continuing the above process we obtain the following values.

\begin{center}
\begin{tabular}{|c||c|c|} \hline
$b$ & $s^*(2,b)$ & $S^*(2,b)$  \\ \hline
1 & $4$ & $\{b_1,c_2,d_3,d_4\} = \{b_1\} \cup S^*(3,1)$. \\
2 & $8$ & $\{b_1,b_2,b_5,c_2,c_3,d_1,d_2,d_4\} = \{b_1,b_2,b_5\} \cup S^*(3,2)$. \\
3 & $6$ & $\{b_1,b_2,b_5,d_1,d_2,d_4\} = \{b_1,b_2,b_5\} \cup S^*(3,3)$. \\
4 & $8$ & $\{b_1,b_2,b_3,b_4,d_1,d_2,d_3,d_4\} = \{b_1,b_2,b_3,b_4\} \cup S^*(3,4)$.  \\
5 & $\infty$ & -  \\ \hline
\end{tabular}
\end{center}

\begin{center}
\begin{tabular}{|c||c|c|} \hline
$b$ & $s^*(1,b)$ & $S^*(1,b)$ \\ \hline
1 & $5$ & $\{a_1,b_1,c_2,d_3,d_4\} = \{a_1\} \cup S^*(2,1)$. \\
2 & $11$ & $\{a_1,a_2,a_3,b_1,b_2,b_5,c_2,c_3,d_1,d_2,d_4\} = \{a_1,a_2,a_3\} \cup S^*(2,2)$. \\
3 & $6$ & $\{b_1,d_1,d_2,d_4\} = \emptyset \cup S^*(2,3)$.     \\
4 & $8$ & $\{b_1,d_1,d_2,d_3,d_4\} = \emptyset \cup S^*(2,4)$.  \\
5 & $\infty$ & -                 \\ \hline
\end{tabular}
\end{center}

We now obtain the value $s(D)$ as the minimum value of $s^*(1,b)$ over all $b=1,2,\ldots,lsc(D)$.  In our above example we get
$s(D)=5$ and a safe set obtaining this value is $S^*(1,1) = \{a_1,b_1,c_2,d_3,d_4\}$. The correctness of the above algorithm \JBJ{follows from the fact that} the values of $s^*(i,b)$ and the sets $S^*(i,b)$ are computed correctly in each step.

We will now compute the complexity of the algorithm. Let $n$ denote the order of $D$. 
 Note that, using e.g. depth-first search, we can find the components $C_i$ in $O(n^2)$ time and in order to compute $s^*(i,b)$ of a given value of $i$ (and all $b$) 
we consider all $W \subseteq V(C_i)$. There are at most $2^{lsc(D)}$ such sets. For each of these sets we may need to consider 
all $j=t_W,t_W+1,\ldots,lsc(D)$ and there are at most $lsc(D)$ such $j$-values. Therefore we can compute all $s^*(i,b)$ for a given $i$
(and every $b$) in time $O(lsc(D) \times 2^{lsc(D)})$.  As there are at most $V(D)$ values of $i$ we get a total 
complexity of $O(n^2  + n \times lsc(D) \times 2^{lsc(D)})$, as desired.
This completes the proof of Theorem~\ref{thm:small_strong_comp_poly_tournament}.

\section{Safe number and strong safe number of tournaments}\label{tournament}

As we have seen so far, finding a minimum safe set in a digraph $D$ is a difficult problem even if we restrict $D$ \JBJ{to be a traceable acyclic digraph or a tournament}. 
In this section, we consider \JBJ{bounds on the safe number and the strong safe number for tournaments}. 

Let $\mathcal{T}_{n}^{k}$ be the set of $n$-vertex tournaments $T$ with
$\kappa{}(T)=k$.
Below we consider the safe number and the strong safe number of tournaments in $\mathcal{T}_{n}^{k}$.
Let
$$
s^{min}(\mathcal{T}_{n}^{k})=\min \{s(T): T\in \mathcal{T}_{n}^{k}\},~~
s^{max}(\mathcal{T}_{n}^{k})=\max \{s(T): T\in \mathcal{T}_{n}^{k}\},
$$
$$
\cs{}^{min}(\mathcal{T}_{n}^{k})=\min \{\cs{}(T): T\in \mathcal{T}_{n}^{k}\},~~
\cs{}^{max}(\mathcal{T}_{n}^{k})=\max \{\cs{}(T): T\in \mathcal{T}_{n}^{k}\},
$$

Now we will try to determine these four parameters of tournaments. 

Let $\mathcal{T}^{*}$ be the set of $n$-tournaments $T$
whose connectivity is $\lfloor\frac{n-1}{2}\rfloor$. It follows from proposition \ref{kappaatmost} that $\mathcal{T}^{*}\neq \emptyset$.
The following  result will be used later.


\begin{theorem}[Reid \cite{Reid1985} and Song \cite{Song1993}]
\label{lemma: cycle-factors in 2-connected tournaments}
\JBJ{Every $2$-strong tournament $T$ which is not isomorphic to the unique $7$-tournament containing no transitive subtournament on 4 vertices has a cycle-factor consisting of two cycles of cycle lengths $t$ and $|T|-t$ for any $3\leq t\leq |T|-3$.
}\end{theorem}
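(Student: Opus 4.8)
The plan is to recast the statement in terms of induced strongly connected subtournaments and then reach all admissible cycle lengths by a one-vertex exchange. First I would note that a cycle-factor consisting of two cycles of lengths $t$ and $|T|-t$ is exactly a partition of $V(T)$ into sets $A,B$ with $|A|=t$ and $|B|=|T|-t$ such that both $T[A]$ and $T[B]$ are strongly connected: by Camion's theorem a tournament has a Hamiltonian cycle if and only if it is strong, so the required two cycles are precisely the Hamiltonian cycles of $T[A]$ and $T[B]$, and together they span $V(T)$. Thus it suffices to show that every $2$-strong tournament other than the exceptional $7$-vertex one admits, for each $3\le t\le |T|-3$, a partition of its vertex set into two strong induced subtournaments of sizes $t$ and $|T|-t$. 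Since $t$ and $|T|-t$ play symmetric roles, I would obtain all values of $t$ by starting at $t=3$ and repeatedly incrementing $t$ by one, which keeps $|B|=|T|-t\ge 4$ at every intermediate step and ends exactly when $|B|=3$.

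The base case $t=3$ is the existence of two complementary cycles, one of which is a $3$-cycle; this is essentially Reid's theorem \cite{Reid1985}. I would start from any $3$-cycle $C=x_1x_2x_3$, which exists because a $2$-strong tournament is not transitive. If $T-V(C)$ is strong we are done, as $|T|-3\ge 3$. Otherwise I would examine the acyclic condensation of $T-V(C)$ with strong components $D_1,\dots,D_r$ ordered so that all arcs run forward, and use the $2$-strong connectivity of $T$ to re-thread the configuration: absorb a suitable vertex of $D_1$ into $C$ while pushing a vertex of $C$ into the complement so that the new complement becomes strong, or else derive a rigid structural constraint on the arcs between $C$ and $D_1,D_r$. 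Ruling out that rigid configuration is exactly where the unique $7$-vertex tournament with no transitive subtournament on $4$ vertices surfaces as the sole exception.

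For the inductive step I would show that a partition into strong subtournaments of sizes $t$ and $|T|-t$ (both at least $3$, with $|B|=|T|-t\ge 4$) can be converted into one of sizes $t+1$ and $|T|-t-1$. The tool is a single-vertex transfer from $B$ to $A$. Removability is available because $T[B]$ is strong on at least four vertices, so by Moon's vertex-pancyclicity it contains a cycle of length $|B|-1$; the one vertex off this cycle is a vertex $v$ with $T[B]-v$ strong. Insertability uses the standard fact that a vertex can be inserted into the Hamiltonian cycle of $T[A]$ unless it dominates all of $A$ or is dominated by all of $A$. If some removable vertex of $B$ is insertable into $A$, the transfer produces the desired partition of sizes $t+1$ and $|T|-t-1$, both strong.

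The step I expect to be hardest is forcing these two requirements to hold for a \emph{single} vertex: we need a $v\in B$ that is simultaneously removable from $T[B]$ without destroying strong connectivity and insertable into $T[A]$. The delicate case is when every removable vertex of $B$ is a source or a sink relative to $A$; here I would invoke $2$-strong connectivity of $T$ (which already forbids all of $B$ from dominating, or being dominated by, all of $A$) together with a rerouting argument that re-chooses the $(|B|-1)$-cycle in $T[B]$ so as to expose a different removable vertex. Managing this simultaneous control, and separately treating the small orders $|T|\le 7$ where the exceptional tournament lives, is precisely where the detailed case analysis of Reid \cite{Reid1985} and Song \cite{Song1993} is concentrated.
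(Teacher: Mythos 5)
This statement is not proved in the paper at all: it is imported as a known theorem of Reid \cite{Reid1985} and Song \cite{Song1993}, so the only ``paper proof'' to compare against is a citation. Your reduction of the problem to partitioning $V(T)$ into two sets inducing strong subtournaments (via Camion's theorem), your choice of $t=3$ as the base case, and your plan to climb from $t$ to $t+1$ do mirror the actual architecture of Song's argument, and your auxiliary facts (Moon's vertex-pancyclicity giving a vertex $v$ with $T[B]-v$ strong; insertability of $v$ into $A$ whenever $v$ has both an in-neighbour and an out-neighbour in $A$) are correct. So the skeleton is sound.

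The gap is that both load-bearing steps are asserted rather than proved, and they constitute essentially the entire content of the theorem. For the base case you say you would ``re-thread the configuration'' or ``derive a rigid structural constraint'' and that ruling this out ``is exactly where'' the exceptional $7$-tournament appears --- but that rigid configuration analysis \emph{is} Reid's paper; nothing in your sketch shows why the absorption succeeds or why $T_7$ is the only obstruction. For the inductive step, the existence of a single vertex of $B$ that is simultaneously removable from $T[B]$ and insertable into $T[A]$ is precisely the hard claim: $2$-strong connectivity only guarantees arcs in both directions between $A$ and $B$ globally, and does not prevent every removable vertex of $B$ from dominating or being dominated by all of $A$; your proposed fix (``re-choose the $(|B|-1)$-cycle to expose a different removable vertex'') is not an argument, since you have no control over which vertices are exposed this way, and when it fails one must restructure the partition non-locally rather than transfer a single vertex. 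You explicitly concede that this is ``precisely where the detailed case analysis of Reid and Song is concentrated,'' which is an accurate self-assessment: what you have written is a plausible reduction and proof strategy, not a proof.
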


\begin{theorem}\label{proposition: minimality of safe numbers}
For every non-negative integer $k$ we have\\

$s^{min}(\mathcal{T}_{n}^{k})=
\begin{cases}
k+1,& \text {$0\leq k\leq 2$};\\
k,& \text {$3\leq k\leq \lceil n/2 \rceil -1$}.
\end{cases}
$

$
\cs{}^{min}(\mathcal{T}_{n}^{k})=
\begin{cases}
1,& \text {$k=0$};\\
3,& \text {$k=1,2$};\\
k,& \text {$3\leq k\leq \lceil n/2 \rceil -1$}.
\end{cases}
$
\end{theorem}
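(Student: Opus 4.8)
The plan is to prove matching lower and upper bounds for both quantities, handling $s^{min}$ and $\cs{}^{min}$ together since $s(T)\le \cs{}(T)$ always and the extremal constructions will in fact exhibit \emph{strong} safe sets. For the lower bounds the key device is a separator argument on the source component of $T-S$. Suppose $\kappa(T)=k\ge 1$ and $S$ is any safe set, and let $M$ be the source strong component of $T-S$ (the one receiving no arc from another component of $T-S$). Every in-neighbour of $M$ lying outside $M$ belongs to $S$; call this set $X\subseteq S$. If $X\cup M\ne V(T)$, then deleting $X$ leaves $M$ as a proper subset of $T-X$ with no entering arcs, so $T-X$ is not strongly connected and hence $|S|\ge|X|\ge\kappa(T)=k$. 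In the remaining case $X\cup M=V(T)$ we get $T-S=M$, and condition~(ii) applied to an arc from $M$ to the component it reaches gives $|S|\ge|M|=n-|S|$, so $|S|\ge\lceil n/2\rceil> k$ because $n\ge 2k+1$. Thus $s(T)\ge k$ for every $T\in\mathcal{T}_n^k$ with $k\ge 1$, and trivially $s(T)\ge 1$ when $k=0$.

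To extract the extra $+1$ when $k\in\{1,2\}$ I would push this one step further. Assume $|S|=k\le 2$; since $n\ge 2k+1$ we are in the first case, so $X=S$, and by condition~(i) there is a strong component $N$ of $T[S]$ with an arc from $M$, whence condition~(ii) gives $|M|\le|N|\le|S|=k\le 2$. As $M$ is strongly connected and there is no strong tournament on exactly two vertices, $M$ is a single vertex $w$; but $X=S$ forces every vertex of $S$ to have an arc to $w$, so $w$ has no arc to $S$, contradicting condition~(i). Hence $s(T)\ge k+1$ for $k\in\{1,2\}$. This also explains the threshold at $k=3$: a size-$k$ safe set needs its source component to be strong with at least two, hence at least three, vertices, which is possible only once $k\ge 3$. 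Finally, a strong safe set induces a strongly connected subtournament, and none exists on two vertices; combined with $\cs{}(T)\ge s(T)$ this gives $\cs{}(T)\ge 3$ for $k\in\{1,2\}$ and $\cs{}(T)\ge k$ for $k\ge 3$.

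For the upper bounds it suffices, for each admissible pair $(n,k)$, to produce one tournament in $\mathcal{T}_n^k$ attaining the claimed value. For $k=0$ the transitive tournament works, its sink being a strong safe set of size $1$. For $k=1$ I would first exhibit a $1$-strong tournament with a size-$2$ safe set $\{s_1,s_2\}$, $s_1\to s_2$, and $T-S$ transitive, exploiting that with $\kappa=1$ the source vertex of $T-S$ may have a single in-arc $s_i\to u$ and still send an arc back to $S$; a second $1$-strong example using a directed triangle as a strong safe set realises $\cs{}=3$. For every $k\ge 2$ and each $n\ge 2k+1$ the target is a tournament of connectivity exactly $k$ containing a strongly connected safe set $S$ of size $\max\{3,k\}$, which simultaneously caps $s$ and $\cs{}$. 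When $n\ge 3k$ this is clean: take the composition $C_3[S,M,R]$ of a directed triangle with strong tournaments $S,M,R$ of sizes $k,k,n-2k$ and arcs $S\Rightarrow M\Rightarrow R\Rightarrow S$, reversing a single $M$--$S$ arc so that $M$ reaches $S$; its connectivity equals $\min\{|S|,|M|,|R|\}=k$, the source component of $T-S$ is $M$ with $N^-(M)=S$, and conditions~(i),(ii) hold immediately.

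The main obstacle is the remaining range $2k+1\le n< 3k$, where the third block is too small for the composition bound to give connectivity $k$, and in particular the maximally connected endpoint $n=2k+1$, where $\kappa=k$ forces $T$ to be a regular tournament (every out-degree equal to $k$). Here I would use an explicit rotational/near-regular construction with $|S|=|M|=k$ and $|R|=n-2k$, distributing the $S$--$M$ arcs so that each vertex of $S$ dominates into $M$ while $M$ retains an arc to $S$; the degree bookkeeping shows the required arc counts are feasible (and the rotational pattern used for $n=7,k=3$ generalises). The delicate point is verifying directly that no separator of size smaller than $k$ exists for these dense examples while keeping $S$ a strong safe set, since one can no longer appeal to the clean block-composition argument. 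By contrast, once the source-component separator idea is in place the lower-bound arguments are short.
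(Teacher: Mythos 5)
Your lower-bound argument is correct and is essentially a unified repackaging of what the paper does: the paper derives $s(T)\ge k$ by noting that if $|S|<k$ then $T-S$ is strong and conditions (i)--(ii) force $|S|\ge n-|S|\ge k+1$, handles $k=1$ via the observation that a size-one safe set exists iff the tournament is transitive, and handles $k=2$ by looking at the source of the (necessarily transitive) remainder and using $\delta^-\ge 2$. Your separator argument on the set $X$ of external in-neighbours of the source component of $T-S$ covers all three cases at once and is sound; the only caveat is that it proves nothing new relative to the paper, and the real content of the theorem is in the constructions, which is where your proposal has genuine gaps.

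First, your main construction is wrong as stated. In $C_3[S,M,R]$ with \emph{strong} blocks of sizes $k,k,n-2k$ and $R\Rightarrow S$, the set $R$ is a strong component of $T-S$ of size $n-2k$ with an arc into the unique (size $k$) strong component of $T[S]$, so condition (ii) forces $n-2k\le k$. Hence the construction yields a safe set only for $n\le 3k$, while your connectivity claim $\kappa=\min\{|S|,|M|,|R|\}=k$ needs $n\ge 3k$; the two requirements meet only at $n=3k$. (It can be repaired by taking $R$ transitive, but you do not do so.) For $k=2$ your block sizes also give $|S|=2$, contradicting your own lower bound $s\ge 3$. Second, the range $2k+1\le n<3k$ --- above all $n=2k+1$, where $\kappa=k$ forces near-regularity --- is precisely where the theorem's difficulty sits, and you leave it as an unproved assertion that a "rotational pattern generalises." The paper's proof does exactly this missing work: it defines $T^k_{2k+1}$ explicitly by $v_iv_j\in A$ for $j\in\{i+1,\ldots,i+k+1\}\setminus\{i+k\}$ (mod $2k+1$), verifies $k$-strongness by exhibiting, for any $k-1$ deleted vertices, a path between consecutive survivors, and checks that $S=\{v_1,\ldots,v_{k-1},v_{k+1}\}$ is a strong safe set. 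It then treats every $n\ge 2k+2$ uniformly by appending a transitive tournament dominated by $S'$ and dominating $\{v_0\}\cup S$, so no case split on $n$ versus $3k$ is ever needed. Without an explicit, verified construction in the range $2k+1\le n<3k$, the upper bound --- and hence the theorem --- is not proved.
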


\begin{proof}
Clearly, $\cs{}^{min}(\mathcal{T}_{n}^{k})\geq s^{min}(\mathcal{T}_{n}^{k})\geq 1$.
If a tournament $T$ has a (strong) safe set of cardinality one, say $\{v\}$,
then $T\backslash v$ is transitive and $T\backslash v\rightarrow v$.
It follows that $T$ is transitive.
The converse statement holds too.
This implies that
a tournament has a (strong) safe set of cardinality one
if and only if it is transitive.
So $s^{min}(\mathcal{T}^{k}_{n})\geq 2$
and $\cs{}^{min}(\mathcal{T}^{k}_{n})\geq 3$ for $k\geq 1$.

We first observe  that
$s^{min}(D)\geq k$ for every $k$-strong oriented graph $D$. This is a simple consequence of the fact that $D-S$ is strongly connected for all subsets $S\subset V$ with $|S|<k$ and the fact that 
 there exists an arc from $V\backslash S$ to $S$:
If $S$ is a safe set and $k>|S|$ then $D[V\setminus S]$ is strong and has an arc to $S$ so it follows from (ii) that we have $k>|S|\geq |V\setminus S|=n-|S|\geq k+1$, contradiction. Here we used the fact that, as pointed out in Proposition \ref{kappaatmost}, 
every $k$-strong oriented  graph has at least $2k+1$ vertices.
So $s^{min}(\mathcal{T}_{n}^{k})\geq k$.
Also, one can see that any safe set of cardinality $k$ of a \JBJ{$k$-strong} tournament $T$
must be a separator  of $T$.

\begin{case}
$k=0$.
\end{case}

Consider a transitive $n$-vertex tournament,
the unique sink forms both a safe set and a strong safe set.
So $s^{min}(\mathcal{T}_{n}^{0})=\cs{}^{min}(\mathcal{T}_{n}^{0})=1$.

\begin{case}
$k=1$.
\end{case}

Consider a tournament $T'$ with vertex set $\{v_{1},\ldots,v_{n}\}$
and $v_{i}\rightarrow v_{j}$ for all $i<j$ with only one exception
$v_{n}\rightarrow v_{1}$.
One can see that $T'$ has a Hamilton cycle $v_{1}\rightarrow v_{2}\rightarrow \ldots\rightarrow v_{n}\rightarrow v_{1}$
and a cut vertex $v_{n}$.
So $T'\in \mathcal{T}^{1}_{n}$.
Clearly, $\{v_{1},v_{n}\}$ is a safe set
and $\{v_{1},v_{2},v_{n}\}$ is a strong safe set of $T'$.
So $s^{min}(\mathcal{T}^{1}_{n})=2$ and $\cs{}^{min}(\mathcal{T}^{1}_{n})=3$.

\begin{case}
$k=2$.
\end{case}

Assume that $s^{min}(\mathcal{T}^{2}_{n})=2$.
Then there exists a tournament $T''$ in $\mathcal{T}^{2}_{n}$
which contains a safe set $S$ of cardinality 2.
Since $T''[S]$ is transitive,
we have that $T''[V\backslash S]$ is also transitive.
Let $v$ be the unique source of $T''[V\backslash S]$.
Since $T''$ is 2-strong,
we have $\delta^{-}(T'')\geq 2$ and $S\rightarrow v$.
There exists no arc from $v$ to $S$,
a contradiction to the definition of a safe set.
So $s^{min}(\mathcal{T}^{2}_{n})\geq 3$
and $\cs{}^{min}(\mathcal{T}^{2}_{n})\geq s^{min}(\mathcal{T}^{2}_{n})\geq 3$.

Now we construct a tournament $T'''$ in $\mathcal{T}^{2}_{n}$
with a safe set of cardinality 3.
Let $V(T''')=\{v_{1},\ldots,v_{n}\}$
and $A(T''')=\{v_{1}v_{2},v_{2}v_{3},v_{3}v_{1}\}
\cup \{v_{i}v_{j}: 4\leq j< i\leq n\}
\cup \{v_{3}v_{4},v_{4}v_{1},v_{4}v_{2}\}
\cup \{v_{i}v_{3},v_{1}v_{i},v_{2}v_{i}: 5\leq i\leq n\}$.
One can check that $T'''$ is 2-strong
and has a vertex-cut set $\{v_{1},v_{2}\}$.
So $T'''\in \mathcal{T}^{2}_{n}$.
Here $T'''$ has a strong safe set $\{v_{1},v_{2},v_{3}\}$.
It follows that $\cs{}^{min}(\mathcal{T}^{2}_{n})=3$
and $s^{min}(\mathcal{T}^{2}_{n})=\cs{}^{min}(\mathcal{T}^{2}_{n})=3$.

\begin{case}
$3\leq k\leq \lceil n/2 \rceil -1$.
\end{case}

Let $T^{k}_{2k+1}$ be a tournament
with $V(T^{k}_{2k+1})=\{v_{0},\ldots,v_{2k}\}$
and $A(T^{k}_{2k+1})=\{v_{i}v_{j}: 0\leq i\leq 2k,
i+1\leq j\leq i+k+1, j\neq i+k\}$,
here the addition is modulo $2k+1$.

We claim that $T^{k}_{2k+1}\in \mathcal{T}^{k}_{2k+1}$.
It is easy to see that $T^{k}_{2k+1}$
has a vertex cut set $\{v_{1},\ldots,v_{k-1},v_{k+1}\}$ of cardinality $k$.
Now we show that it is $k$-strong,
i.e., it remains strongly connected
after deleting any set of fewer than $k$ vertices.

Let $M$ be an arbitrary subset of $V(T^{k}_{2k+1})$ with $|M|=k-1$.
Denote by $V(T^{k}_{2k+1})\backslash M=\{v_{t_{0}},v_{t_{1}},\ldots,v_{t_{k+1}}\}$
with $0\leq t_{0}<t_{1}<\cdots <t_{k+1}\leq 2k$.
It suffices to show that there exists a path
from $v_{t_{i}}$ to $v_{t_{i+1}}$ for any $0\leq i\leq k+1$.
Since $|M|=k-1$,
we have either $t_{i+1}\in \{t_{i}+1,\ldots,t_{i}+(k-1)\}$ or $t_{i+1}=t_{i}+k$.
If $t_{i+1}\in \{t_{i}+1,\ldots,t_{i}+(k-1)\}$,
then $v_{t_{i}}\rightarrow v_{t_{i+1}}$ is a path from $v_{t_{i}}$ to $v_{t_{i+1}}$.
If $t_{i+1}=t_{i}+k$,
then \JBJ{$M=\{v_{t_i+1},\ldots{},v_{t_i+k-1}\}$ and we have}
$v_{t_{i}}\rightarrow v_{t_{i}+k+1}\rightarrow v_{t_{i}+2k}=v_{t_{i}-1}
\rightarrow v_{t_{i}+k}$ is a path from $v_{t_{i}}$ to $v_{t_{i+1}}$.

Let $S=\{v_{1},\ldots,v_{k-1},v_{k+1}\}$
and $S'=\{v_{k},v_{k+2},\ldots,v_{2k}\}$.
Then $V(T^{k}_{2k+1})=\{v_{0}\}\cup S\cup S'$,
$v_{0}\rightarrow S$ and $S'\rightarrow v_{0}$.
Note that both $T^{k}_{2k+1}[S]$ and $T^{k}_{2k+1}[S']$ are strongly connected.
In fact, $T^{k}_{2k+1}[S]$ has a Hamilton cycle
$v_{1}v_{2}\cdots v_{k-1}v_{k+1}v_{1}$
and $T^{k}_{2k+1}[S']$ has a Hamilton cycle
$v_{k}v_{k+2}\cdots v_{2k-1}v_{2k}v_{k}$.
Also, note that \JBJ{the arc} $v_{2k}v_{1}$ \JBJ{goes} from $S'$ to $S$.
\JBJ{Thus} $S=\{v_{1},\ldots,v_{k-1},v_{k+1}\}$
is a \JBJ{strong} safe set of cardinality $k$ in $T^{k}_{2k+1}$.

For any $n\geq 2k+2$,
we can construct a tournament $T^{k}_{n}$ from $T^{k}_{2k+1}$ as follows.
Let $V(T^{k}_{n})=V(T^{k}_{2k+1})\cup V^{0}$,
here $V^{0}=\{u_{1},\ldots,u_{n-2k-1}\}$.
Let $A(T^{k}_{n})=A(T^{k}_{2k+1})
\cup \{u_{i}u_{j}: 1\leq i<j\leq n-2k-1\}
\cup \{uv: u\in V^{0}, v\in \{v_{0}\}\cup S\}
\cup \{wu: w\in S', u\in V^{0}\}$.
It is not difficult to check that $T^{k}_{n}\in \mathcal{T}^{k}_{n}$
and $S$ is a strong safe set.
So $\cs{}^{min}(\mathcal{T}_{n}^{k})=k$ for $3\leq k\leq \lceil n/2 \rceil -1$.
\end{proof}

\begin{theorem}\label{proposition: maximality of safe numbers}
If $n\geq 3k\geq 9$,
then $\lfloor n/2 \rfloor \leq s^{max}(\mathcal{T}_{n}^{k})\leq \cs{}^{max}(\mathcal{T}_{n}^{k})\leq \lceil n/2 \rceil$.
\end{theorem}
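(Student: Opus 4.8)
The plan is to prove the three inequalities separately. The middle one, $s^{max}(\mathcal{T}_n^k)\le \cs{}^{max}(\mathcal{T}_n^k)$, is immediate: if $T_0\in\mathcal{T}_n^k$ attains $s^{max}(\mathcal{T}_n^k)$, then by the proposition that $s(D)\le \cs{}(D)$ for every digraph we get $s^{max}(\mathcal{T}_n^k)=s(T_0)\le \cs{}(T_0)\le \cs{}^{max}(\mathcal{T}_n^k)$. It therefore remains to prove the outer bounds $\cs{}^{max}(\mathcal{T}_n^k)\le \lceil n/2\rceil$ (a bound to be verified for \emph{every} tournament in the class) and $s^{max}(\mathcal{T}_n^k)\ge \lfloor n/2\rfloor$ (for which it suffices to exhibit a \emph{single} tournament).

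For the upper bound I would take an arbitrary $T\in\mathcal{T}_n^k$ and produce a strong safe set of size $\lceil n/2\rceil$. Since $n\ge 3k\ge 9$ we have $\kappa(T)=k\ge 3$, so $T$ is $2$-strong, and $n>7$, so $T$ is not the exceptional $7$-vertex tournament; hence Theorem~\ref{lemma: cycle-factors in 2-connected tournaments} applies and, choosing $t=\lfloor n/2\rfloor$ (which lies in the admissible range $[3,n-3]$ because $n\ge 9$), yields a cycle factor consisting of two cycles $C_1,C_2$ of lengths $\lfloor n/2\rfloor$ and $\lceil n/2\rceil$. Put $S=V(C_2)$. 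Then $T[S]$ contains the Hamilton cycle $C_2$ and is strongly connected, while $T-S=T[V(C_1)]$ contains $C_1$ and so is a single strong component $M$ of size $\lfloor n/2\rfloor\le \lceil n/2\rceil=|S|$. As $T$ is strong, $M$ has an arc into $S$, so conditions (i) and (ii) both hold for $S$; thus $S$ is a strong safe set and $\cs{}(T)\le \lceil n/2\rceil$. Since $T$ was arbitrary, $\cs{}^{max}(\mathcal{T}_n^k)\le \lceil n/2\rceil$.

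For the lower bound I would first record the mechanism that forces safe sets to be large. If $S$ is safe, then combining (i) and (ii) every strong component $M$ of $T-S$ sends an arc to a strong component $N$ of $T[S]$ with $|M|\le |N|\le |S|$; consequently, if for a given $S$ some strong component of $T-S$ either has no arc to $S$ or sends an arc to a strictly smaller strong component of $T[S]$, then $S$ is not safe. In particular, whenever $T-S$ is strongly connected with $|T-S|>|S|$ (which holds as soon as $|S|<n/2$), the set $S$ fails to be safe. So it suffices to construct $T\in\mathcal{T}_n^k$ for which every deletion set $S$ with $|S|\le \lfloor n/2\rfloor-1$ leaves a strong component larger than $S$. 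My construction would start from a maximally connected tournament $H$ on $n-1$ vertices, which exists by Proposition~\ref{kappaatmost} and satisfies $\kappa(H)=\lfloor n/2\rfloor-1$, and add one vertex $x$ dominating exactly $k$ vertices of $H$ and dominated by the remaining $n-1-k$. A short check gives $\kappa(T)=k$: deleting the $k$ out-neighbours of $x$ makes $x$ a sink, while deleting fewer than $k$ vertices keeps $H$ strong (as $k-1<\kappa(H)$) and leaves $x$ with a surviving out-neighbour and in-neighbour, so $T$ stays strong. When $x\in S$ we delete at most $\lfloor n/2\rfloor-2<\kappa(H)$ vertices of $H$, so $T-S=H-(S\setminus\{x\})$ is strong and larger than $S$; when $x\notin S$ and $|S|\le \lfloor n/2\rfloor-2$, the subtournament $H-S$ is strong and already gives a strong component of size $>|S|$. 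In all these cases $S$ is not safe.

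The single delicate case — the step I expect to be the main obstacle — is $x\notin S$ with $|S|=\lfloor n/2\rfloor-1=\kappa(H)$, where $S$ may be a minimum separator of $H$ so that $H-S$ breaks into several strong components, none exceeding $|S|$. Here the added vertex $x$ must do the work: I would choose its out-neighbourhood $N^+(x)$ so that it meets the top strong component of $H-S$ for every minimum separator $S$, which together with the fact that the large set $N^-(x)$ meets the bottom component lets $x$ close a cycle through all of $H-S$ and merge it into one strong component of size $\lceil n/2\rceil+1>|S|$, again defeating safety. Arranging $N^+(x)$ to meet the top component uniformly over all minimum separators of $H$ — or, alternatively, replacing the generic core $H$ by an explicit highly symmetric tournament (for instance a quadratic-residue tournament) whose minimum separators are well understood — is the crux of the argument and the part requiring the most care.
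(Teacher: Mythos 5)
Your handling of the middle inequality and of the upper bound $\cs{}^{max}(\mathcal{T}_{n}^{k})\le\lceil n/2\rceil$ (via Theorem~\ref{lemma: cycle-factors in 2-connected tournaments} with $t=\lfloor n/2\rfloor$, checking $k\ge 3$ and $n\ge 9$ to exclude the exceptional tournament) is correct and essentially identical to the paper's argument. The lower bound, however, contains a genuine gap that you yourself identify and then leave open. Your case analysis correctly disposes of every $S$ with $|S|\le\lfloor n/2\rfloor-2$, and of $x\in S$, but the entire difficulty of the theorem is concentrated in the remaining case $x\notin S$, $|S|=\lfloor n/2\rfloor-1=\kappa(H)$, where $S$ may be a minimum separator of $H$. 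For a \emph{generic} maximally connected tournament $H$ you have no information about what its minimum separators look like, and $N^{+}(x)$ has only $k\le n/3$ elements, so there is no a priori reason that one fixed $k$-set can meet the top strong component of $H-S$ for \emph{every} minimum separator $S$ simultaneously. Declaring this step ``the crux requiring the most care'' is an accurate diagnosis, not a proof; as written, the lower bound is not established.

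The paper avoids this exact obstacle by refusing to work with a generic core: it takes the circulant $k'$-regular tournament $T^{\dag}$ on $2k'+1$ vertices with arcs $v_iv_{i+1},\dots,v_iv_{i+k'}$ and proves (Claim~\ref{claim: constructed tournament}) that its minimum vertex cuts are \emph{exactly} the sets of $k'$ consecutive indices; the appended vertex $v^{*}$ gets the interval $\{v_0,\dots,v_{k-1}\}$ as its out-neighbourhood. With the separators classified, the critical case is also killed in a different way from what you propose: after deleting a consecutive cut $S$, both $T[S]$ and $T[\overline{S}]$ become transitive, so every strong component is a singleton, and the singleton consisting of the first vertex after the cut has \emph{no} arc into $S$ at all, violating condition (i) --- rather than your plan of merging $T-S$ into a single oversized component to violate (ii). If you want to complete your version, the cleanest repair is precisely your own parenthetical suggestion: replace the generic $H$ by an explicit core whose minimum separators you can enumerate (the circulant does this with a short argument), and then verify the critical case by hand; note also that the paper appends a second vertex to handle odd $n$, so your single-vertex construction should be checked separately for both parities.
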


\begin{proof}
By definition,
$s^{max}(\mathcal{T}_{n}^{k})\leq \cs{}^{max}(\mathcal{T}_{n}^{k})$.
It suffices to show that $\cs{}^{max}(\mathcal{T}_{n}^{k})\leq \lceil n/2 \rceil$
and $s^{max}(\mathcal{T}_{n}^{k})\geq \lfloor n/2 \rfloor$.
Let $T$ be an arbitrary tournament in $\mathcal{T}^{k}_{n}$ with $k\geq 3$.
By \JBJ{Theorem} \ref{lemma: cycle-factors in 2-connected tournaments},
$T$ has \JBJ{a pair of disjoint cycles} $C_{1}$ and $C_{2}$
with $|C_{1}|=\lceil n/2 \rceil$
and $|C_{2}|=\lfloor n/2 \rfloor$.
Clearly, $V(C_{1})$ is a strong safe set of $T$.
So $\cs{}^{max}(\mathcal{T}_{n}^{k})\leq \lceil n/2 \rceil$.

Now we show that $s^{max}(\mathcal{T}_{n}^{k})\geq \lfloor n/2 \rfloor$ for general $n$.
For even $n$ and odd $n$,
we construct $T^{*}$ and $T^{**}$ respectively satisfying the desired result.

We first construct a tournament $T^{\dag}$ with $|T^{\dag}|=2k'+1$
and satisfying that every safe set contains at least $k'+1$ vertices.
Let $T^{\dag}$ be \JBJ{the regular} tournament
with vertex set $\{v_{0},v_{1},\ldots,v_{2k'}\}$
and arc set $\{v_{i}v_{j}: 0\leq i\leq 2k', j=i+1,\ldots,i+k'\}$,
where addition is modulo \JBJ{$2k'+1$}.
Note that $T^{\dag}$ is $k'$-regular, i.e.,
every vertex has both out-degree and in-degree $k'$.

\begin{claim}\label{claim: constructed tournament}
The connectivity of $T^{\dag}$ is $k'$
and the indices of the vertices in each vertex cut set of cardinality $k'$ form a consecutive sequence.
\end{claim}

\begin{proof}
One can see that $\{v_{1},\ldots,v_{k'}\}$ is clearly a vertex cut set of cardinality $k'$ of $T^{\dag}_{2k'+1}$.
Now we show that $T^{\dag}$ is strongly connected after deleting any set of $k'-1$ vertices.
Let $M$ be an arbitrary subset of $V(T^{\dag})$ with $|M|=k'-1$.
Let $V(T^{\dag}-M)=\{v_{t_{0}},v_{t_{1}},\ldots,v_{t_{k'+1}}\}$.
Assume without loss of generality that $0\leq t_{0}<t_{1}<\cdots <t_{k'+1}\leq 2k'$.
Since $|M|=k'-1$,
we have $t_{i+1}\in \{t_{i}+1,\ldots,t_{i}+k'\}\backslash M$
and $v_{t_{i}}\rightarrow v_{t_{i+1}}$ for each $0\leq i\leq k'+1$, here $v_{t_{k'+2}}=v_{t_{0}}$.
So $T^{\dag}-M$ has a Hamilton cycle
$v_{t_{0}}\rightarrow v_{t_{1}}\rightarrow \ldots \rightarrow v_{t_{k'+1}}\rightarrow v_{t_{0}}$
and is strongly connected.
So $T^{\dag}$ has connectivity $k'$.

For any subset of $k'$ vertices whose indices do not form a consecutive sequence,
by using similar analysis above,
we can show that the removal of these $k'$ vertices results in a Hamiltonian subtournament, that is, strongly connected.
Then the result follows directly.
\end{proof}

Let $S^{\dag}$ be a safe set of $T^{\dag}$ with minimum cardinality \JBJ{and let $\overline{S^{\dag}}$ be its complement}.
If $|S^{\dag}|\leq k'-1$,
then $|\overline{S^{\dag}}|\geq k'+2$
and $T^{\dag}[\overline{S^{\dag}}]$ is strongly connected since $T^{\dag}$ is $k'$-strong.
Recall that every vertex in $S^{\dag}$ has in-degree $k$ and $|S^{\dag}|\leq k'-1$,
so there exists an arc from $\overline{S^{\dag}}$ to $S^{\dag}$.
By the definition of safe sets,
$|\overline{S^{\dag}}|\leq |S^{\dag}|$, a contradiction.

If $|S^{\dag}|=k'$,
then $|\overline{S^{\dag}}|=k+1$.
One can see that there exists an arc from $\overline{S^{\dag}}$ to $S^{\dag}$.
So $T^{\dag}[\overline{S^{\dag}}]$ is not strongly connected by the fact that $S^{\dag}$ is a safe set.
This implies that $S^{\dag}$ is vertex cut set
and, by Claim \ref{claim: constructed tournament},
the indices of vertices in $S^{\dag}$ form a consecutive sequence.
We may assume that $S^{\dag}=\{v_0,...,v_{k'-1}\}$ and $\overline{S}=\{v_{k'},...,v_{2k'}\}$.
Then each vertex of $\overline{S^{\dag}}$ forms a \JBJ{strong} component
and there is no arc from the \JBJ{strong} component $v_{k'}$ to any vertex of $S^{\dag}$.
This contradicts the assumption that $S^{\dag}$ is a safe set.
Thus $|S^{\dag}|\geq k'+1$.

For even $n$,
we may assume that $n=2k'+2\geq 2k+2$
and construct a tournament $T^{*}$ from $T^{\dag}$ by adding a new vertex $v^{*}$
together with $k$ arcs from $v^{*}$ to $\{v_{0},\ldots,v_{k-1}\}$
and $n-k-1$ arcs from $V(T^{\dag})-\{v_{0},\ldots,v_{k-1}\}$ to $v^{*}$.
One can see that $|T^{*}|=2k'+2$ and the connectivity of $T^{*}$ is $k$.
Now $\{v_{0},v_{1},\ldots,v_{k'-1},v_{k'+1}\}$ is clearly a safe set of $T^{*}$ of cardinality $k'+1=n/2$.
We show that $s(T^{*})\geq k'+1=\lfloor n/2\rfloor$,
which will imply that $s^{max}(\mathcal{T}_{n}^{k})\geq \lfloor n/2\rfloor$ for even $n$.

Suppose the opposite that $S$ is a safe set of cardinality at most $k'$ in $T^{*}$.
Since $T^{\dag}$ is $k'$-strong,
we have \JBJ{$S\subset V(T^{\dag})$, $S$ must be } a vertex cut set of $T^{\dag}$ and $|S|=k'$;
otherwise, $T^{\dag}-S$ is strongly connected
and the largest strongly connected component of $T^{*}-S$
has order at least $|T^{\dag}-S|>|S|$, a contradiction.
By Claim \ref{claim: constructed tournament},
the indices of the vertices in $S$ form a consecutive sequence.
We may assume that  $S=\{v_0,...,v_{k'-1}\}$ and $\overline{S}=\{v_k',...,v_{2k'},v^{*}\}$ in $T^{*}$.
Now each vertex of $S$ forms a component in $T^{*}[S]$.
It follows that each vertex in $\overline{S}$ form a component in $T^{*}[\overline{S}]$
and has an out-neighbor in $S$,
a contradiction to the fact that there is no arc from $v_{k'}$ to $S$.

For odd $n$,
if $n=2k+1$
then let $k'=k$ and now $T^{\dag}$ is a tournament with safe number at least $k+1=\lfloor n/2 \rfloor +1$,
satisfies the desired inequality.
It suffices to consider the case of $n>2k+1$.
Assume without loss of generality that $n=2k'+3$
and construct a tournament $T^{**}$ from $T^{*}$ by adding a new vertex $v^{**}$
together with $k$ arcs from $v^{**}$ to $\{v_{0},\ldots,v_{k-1}\}$
and $n-k-1$ arcs from $V(T^{*})-\{v_{0},\ldots,v_{k-1}\}$ to $v^{**}$.
One can see that $|T^{**}|=2k'+3$ and the connectivity of $T^{**}$ is $k$.
Similar to the proof for even $n$,
we can show that $s(T^{**})\geq \lfloor n/2\rfloor$ and $s^{max}(\mathcal{T}_{n}^{k})\geq \lfloor n/2\rfloor$ for odd $n$.
\end{proof}



\section{Some remarks and open questions}\label{last}

As introduced in Section~\ref{tournament}, we can similarly define $s^{max}(\mathcal{D})$ or $\cs{}^{max}(\mathcal{D})$  (that is, $\cs{}^{max}(\mathcal{D})=\max\{\cs{}(D): D\in \mathcal{D}\})$
 for a set of digraphs $\mathcal{D}$ with certain properties. 
Perhaps it would be interesting to investigate these parameters for dense strong digraphs. 
Let $\mathcal{D}_n$ be a set of $n$-vertex digraphs such that 
$\min\{\delta^{+}(D),\delta^{-}(D)\}\geq \lceil n/2 \rceil$ for every $D\in \mathcal{D}_n$.
Note that any digraph $D\in \mathcal{D}_n$ is strongly connected by the following \JBJ{ well-known fact which we prove for the convenience of the reader}.

\begin{fact}
Let $D$ be an $n$-vertex digraph with
$\min\{\delta^{+}(D),\delta^{-}(D)\}\geq \lfloor n/2 \rfloor$.
Then $D$ is strongly connected.
Moreover, the degree condition is best possible in general.
\end{fact}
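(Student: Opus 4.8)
The plan is to prove the two assertions separately: first that the degree condition forces strong connectivity, and then that it cannot be weakened.

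For the first part I would argue by contradiction. Suppose $D$ is not strongly connected. Then its strong components can be ordered $C_1,\ldots,C_k$ with $k\ge 2$ so that every arc between two distinct components goes from a component of smaller index to one of larger index. Set $X=C_1$ and $Y=V(D)\setminus C_1$; both are non-empty and there is no arc from $Y$ to $X$. Now I would bound the two parts from below. Every in-neighbour of a vertex $x\in X$ must lie in $X$, since an in-neighbour in $Y$ would be an arc from $Y$ to $X$; hence $|X|-1\ge\delta^-(D)\ge\lfloor n/2\rfloor$. Symmetrically, every out-neighbour of a vertex $y\in Y$ lies in $Y$, so $|Y|-1\ge\delta^+(D)\ge\lfloor n/2\rfloor$. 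Adding the two inequalities gives $n=|X|+|Y|\ge 2\lfloor n/2\rfloor+2\ge n+1$, a contradiction. Hence $D$ is strong.

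For the claim that the bound is best possible I would exhibit, for every $n\ge 2$, a non-strong $n$-vertex digraph whose minimum semi-degree is exactly $\lfloor n/2\rfloor-1$. Partition the vertex set into $A$ and $B$ with $|A|=\lceil n/2\rceil$ and $|B|=\lfloor n/2\rfloor$, make each of $D[A]$ and $D[B]$ a complete digraph (both arcs between every pair inside a part), and add all arcs from $A$ to $B$ and none from $B$ to $A$. This digraph is not strong, since no vertex of $A$ is reachable from $B$. A direct count gives $\delta^-(D)=|A|-1$ (attained in $A$) and $\delta^+(D)=|B|-1$ (attained in $B$), so $\min\{\delta^+(D),\delta^-(D)\}=\lfloor n/2\rfloor-1$, exactly one short of the threshold.

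I do not expect a genuine obstacle here; the only point requiring care is the interplay with the floor function. In the forward direction one must verify that $2\lfloor n/2\rfloor+2\ge n+1$ in both parities, the left-hand side being $n+2$ for even $n$ and $n+1$ for odd $n$, so the contradiction is reached in every case. In the construction one must split the vertices so that the smaller part has precisely $\lfloor n/2\rfloor$ vertices, which is exactly what forces the minimum semi-degree down to $\lfloor n/2\rfloor-1$ rather than lower.
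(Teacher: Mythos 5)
Your proof is correct and follows essentially the same approach as the paper: for strong connectivity, both arguments use the acyclic ordering of strong components to find a vertex set with no incoming (resp.\ no outgoing) arcs and derive a contradiction from the semi-degree bounds, and for sharpness both use two complete digraphs with all arcs oriented from one to the other. The only difference is cosmetic: you add the two inequalities $|X|\geq\lfloor n/2\rfloor+1$ and $|Y|\geq\lfloor n/2\rfloor+1$ instead of the paper's ``without loss of generality'' choice of the smaller of the initial and terminal components, and your extremal example with parts of sizes $\lceil n/2\rceil$ and $\lfloor n/2\rfloor$ works for every $n\geq 2$, whereas the paper only exhibits the example for $n=4k+2$.
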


\begin{proof}
Assume the opposite that $D$ is not strongly connected.
Then $D$ contains a component $D_{1}$
such that there is no arc from $V(D)\backslash V(D_{1})$ to $V(D_{1})$.
Also, $D$ contains a component $D_{2}$
such that there is no arc from $V(D_{2})$ to $V(D)\backslash V(D_{2})$.
Assume without loss of generality that $|V(D_{1})|\leq |V(D_{2})|$.
Then $|V(D_{1})|\leq \lfloor n/2 \rfloor$.
Now we have $d^{-}_{D}(v)\leq \lfloor n/2 \rfloor-1$
for every vertex $v\in V(D_{1})$,
a contradiction.

Now we show that the degree condition is best possible.
Let $D_{1},D_{2}$ be two complete symmetric digraph on $2k+1$ vertices.
Let $D^{*}$ be a digraph with
$V(D^{*})=V(D_{1})\cup V(D_{2})$
and $A(D^{*})=A(D_{1})\cup A(D_{2})\cup \{uv:u\in V(D_{1}),v\in V(D_{2})\}$.
One can see that $\delta^{+}(D^{*})=\delta^{-}(D^{*})=2k=\lfloor |D^{*}|/2 \rfloor -1$
and $D^{*}$ is not strongly connected.
\end{proof}



A digraph is {\it symmetric}
if $u\rightarrow v$ and $v\rightarrow u$
for every two vertices $u,v$ of the digraph.

\begin{theorem}[H\"{a}ggkvist and Thomassen \cite{HT1976}]
\label{HT}
Let $D$ be an $n$-vertex digraph with
$\min\{\delta^{+}(D),\delta^{-}(D)\}\geq \lceil n/2 \rceil$.
Then $D$ is either pancyclic
or $n$ is even and $D$ is the complete bipartite symmetric digraph \JBJ{$\stackrel{\leftrightarrow}{K}_{n/2,n/2}$}.
\end{theorem}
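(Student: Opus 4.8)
The plan is to show that the degree hypothesis forces a directed cycle of \emph{every} length $\ell$ with $2\le \ell\le n$, unless the exceptional bipartite structure occurs. First I would dispatch the two endpoints of this range. For $\ell=n$: writing $\delta^0(D):=\min\{\delta^+(D),\delta^-(D)\}\ge\lceil n/2\rceil\ge n/2$, the Fact proved just above shows $D$ is strongly connected, and then Ghouila--Houri's theorem (a strong digraph with $d^+(v)+d^-(v)\ge n$ for all $v$ is Hamiltonian) yields a Hamilton cycle, i.e.\ a cycle of length $n$; the hypothesis gives $d^+(v)+d^-(v)\ge 2\lceil n/2\rceil\ge n$, so it applies. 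For $\ell=2$: a double counting gives $|A(D)|=\sum_v d^+(v)\ge n\lceil n/2\rceil>\binom{n}{2}$, so $D$ cannot be an oriented graph and must contain a digon, i.e.\ a cycle of length $2$.

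The core of the argument is a downward induction filling in the intermediate lengths: assuming $D$ contains a directed cycle $C=w_1w_2\cdots w_\ell w_1$ of length $\ell$ with $4\le\ell\le n$, I would produce a cycle of length $\ell-1$. The cheap mechanism is bypassing a single vertex: if $w_{i-1}\to w_{i+1}$ is an arc for some $i$ (indices mod $\ell$), then deleting $w_i$ from $C$ gives an $(\ell-1)$-cycle. Hence I may assume $w_{i-1}\not\to w_{i+1}$ for every $i$, and this is where the degree condition must be injected: since each vertex has at least $\lceil n/2\rceil$ out- and in-neighbours, the many ``forbidden'' chords, together with the vertices lying off $C$, severely constrain the adjacencies between $V(C)$ and $V(D)\setminus V(C)$. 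I would then run a Bondy-type counting and rotation argument on these neighbourhoods to either locate an $(\ell-1)$-cycle (possibly routed through an external vertex) or force an extremal configuration.

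The main obstacle---and the step that separates this from the semicomplete case, where the absence of $w_{i-1}\to w_{i+1}$ would immediately hand back the reverse arc---is precisely this extremal analysis in a \emph{general} digraph: when no reduction is possible, the neighbourhood inequalities must be pushed until they become equalities, forcing every out- and in-neighbourhood to have size exactly $n/2$ and to coincide with one side of a fixed bipartition of $V(D)$ into two parts of size $n/2$, with all arcs present in both directions across the split and none inside either side, i.e.\ $D=\stackrel{\leftrightarrow}{K}_{n/2,n/2}$. Bipartiteness then explains why the reduction fails exactly there: an even-length cycle admits no odd $(\ell-1)$-cycle. Closing the induction for all $3\le\ell\le n-1$ outside this single exception, and combining it with the endpoints $\ell\in\{2,n\}$ established first, shows $D$ is pancyclic, completing the proof.
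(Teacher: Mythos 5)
First, a point of reference: the paper does not prove this statement at all --- Theorem~\ref{HT} is quoted from H\"{a}ggkvist and Thomassen \cite{HT1976} and used as a black box, so there is no in-paper proof to compare against. Judged on its own terms, your proposal correctly disposes of the two extreme cycle lengths: the count $|A(D)|\ge n\lceil n/2\rceil>\binom{n}{2}$ does force a digon, and Ghouila--Houri (whose hypothesis $d^{+}(v)+d^{-}(v)\ge 2\lceil n/2\rceil\ge n$ is met, with strong connectivity supplied by the Fact) does give a Hamilton cycle; note that $\stackrel{\leftrightarrow}{K}_{n/2,n/2}$ is itself Hamiltonian and contains digons, so these two cases hold unconditionally, as they must.

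The gap is the entire middle of the argument. The sentence promising ``a Bondy-type counting and rotation argument'' that will ``either locate an $(\ell-1)$-cycle \ldots or force an extremal configuration'' is a restatement of the theorem's difficulty, not a proof of it. Two specific points make this more than a routine omission. First, rotation and extension techniques transfer poorly from graphs to digraphs, because a directed cycle cannot be traversed backwards; this is precisely why the digraph analogue of Bondy's pancyclicity theorem required a separate and lengthy argument from H\"{a}ggkvist and Thomassen rather than a transcription of Bondy's proof, and your sketch gives no substitute mechanism. Second, your downward induction needs the following strong dichotomy at \emph{every} length $\ell$: either an $(\ell-1)$-cycle exists, or the entire digraph equals $\stackrel{\leftrightarrow}{K}_{n/2,n/2}$. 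A single failure of the reduction (at one even $\ell$) must therefore propagate to a complete global determination of $D$; the ``extremal analysis'' is not a local computation around one cycle, and nothing in the proposal explains how the absence of the chords $w_{i-1}\to w_{i+1}$ together with the degree bound pins down every adjacency of $D$. As written, the proposal establishes cycles only of lengths $2$ and $n$.
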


Note that the complete bipartite symmetric digraph \JBJ{$\stackrel{\leftrightarrow}{K}_{n/2,n/2}$}
has a cycle of length $\lceil n/2 \rceil$.
Therefore, in view of Theorem~\ref{HT}, we have the following fact.

\begin{fact}
$\cs{}^{max}(\mathcal{D}_n)\leq \lceil n/2 \rceil$.
\end{fact}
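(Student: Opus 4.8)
The plan is to show that every $D\in\mathcal{D}_n$ admits a \emph{strong} safe set of size at most $\lceil n/2\rceil$; since $\cs{}^{max}(\mathcal{D}_n)$ is the maximum of $\cs{}(D)$ over such $D$, this immediately yields the Fact. First I would record that $D$ is strongly connected (by the preceding Fact, as the hypothesis $\min\{\delta^+(D),\delta^-(D)\}\geq\lceil n/2\rceil\geq\lfloor n/2\rfloor$ is even stronger than what that Fact needs), so that Theorem~\ref{HT} applies: either $D$ is pancyclic, or $n$ is even and $D=\stackrel{\leftrightarrow}{K}_{n/2,n/2}$. The entire argument then reduces to exhibiting, in each case, an induced subdigraph on exactly $\lceil n/2\rceil$ vertices that is strongly connected.

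The key step I would isolate as a small lemma: \emph{any} set $S\subseteq V(D)$ with $D[S]$ strongly connected and $|S|=\lceil n/2\rceil$ is automatically a strong safe set. Indeed, $N:=S$ is then the unique strong component of $D[S]$. For condition~(i), take any $v\in V(D)\setminus S$; since $d^+(v)\geq\lceil n/2\rceil$ while $v$ has at most $|V(D)\setminus S|-1=\lfloor n/2\rfloor-1<\lceil n/2\rceil$ out-neighbours outside $S$, it must have an out-neighbour in $S$. Hence every vertex, and therefore every strong component $M$, of $D-S$ has an arc to $N$. For condition~(ii), every such $M$ satisfies $|M|\leq|V(D)\setminus S|=\lfloor n/2\rfloor\leq\lceil n/2\rceil=|N|$, so the size requirement holds as well. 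This is also where the value $|S|=\lceil n/2\rceil$ is forced from below: a strictly smaller $S$ could leave behind a strong component of $D-S$ of order exceeding $|S|$, which would violate~(ii).

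It then remains only to produce such an $S$. In the pancyclic case I would simply take $S$ to be the vertex set of a cycle of length $\lceil n/2\rceil$, which exists whenever $\lceil n/2\rceil\geq 3$, i.e. for all $n\geq 5$; then $D[S]$ contains this cycle and is strongly connected. In the bipartite case $D=\stackrel{\leftrightarrow}{K}_{m,m}$ with $m=n/2$, I would instead pick $S$ consisting of one vertex from one side together with $m-1$ vertices from the other (for $m\geq 2$), so that $D[S]\cong\stackrel{\leftrightarrow}{K}_{1,m-1}$ is strongly connected and $|S|=m=\lceil n/2\rceil$; the degenerate case $m=1$ (a single digon) is immediate.

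The main obstacle I anticipate is precisely the bipartite case: because $\stackrel{\leftrightarrow}{K}_{m,m}$ contains only cycles of even length, when $m$ is odd there is \emph{no} cycle of length $\lceil n/2\rceil=m$, and so the clean ``take a half-length cycle'' recipe used in the pancyclic case breaks down here. My fix is to abandon the cycle entirely and use the unbalanced strongly connected set $\stackrel{\leftrightarrow}{K}_{1,m-1}$ described above, which has the required order $m$ irrespective of the parity of $m$; the lemma only requires that $D[S]$ be strongly connected, not that $S$ arise from a single cycle. A secondary, purely bookkeeping matter is to verify directly the finitely many small values of $n$ (those with $\lceil n/2\rceil<3$), where the pancyclic cycle argument does not literally apply.
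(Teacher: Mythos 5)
Your proof is correct and follows the same route as the paper: invoke Theorem~\ref{HT} and exhibit a strongly connected set of exactly $\lceil n/2\rceil$ vertices, which your preliminary lemma (the out-degree count for condition~(i), the size count for condition~(ii)) correctly shows is a strong safe set. The one substantive difference is in the exceptional case of Theorem~\ref{HT}: the paper disposes of $\stackrel{\leftrightarrow}{K}_{n/2,n/2}$ by asserting it contains a cycle of length $\lceil n/2\rceil$, which fails when $n/2$ is odd, since that digraph is bipartite and so has only even cycles. You spot exactly this and replace the half-length cycle by a set of $m=n/2$ vertices inducing $\stackrel{\leftrightarrow}{K}_{1,m-1}$, which is strongly connected and so still satisfies your lemma; this repairs the argument without changing the conclusion, and your version is the one that actually closes this case.
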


We propose the following question. 
\begin{question}
Is $\cs{}^{max}(\mathcal{D}_n)= \lfloor n/2 \rfloor$ ?
\end{question}

A digraph $D$ is {\it clique acyclic} if $D$ contains no directed triangle.  
Gy\'arf\'as et al. \cite{GST} strengthened Theorem~\ref{indep} as follows.   

\begin{theorem}[Gy\'arf\'as et al. (Theorem 3; \cite{GST})]
\label{indep2}
Let $f$ be a recursive function such that $f(1)=1$ and for $\alpha\ge 2$, $f(\alpha)=\alpha+\JBJ{\alpha}f(\alpha-1)$. 
If $D$ is a clique acyclic digraph with independence number $\alpha$, then $\gamma(D)\le f(\alpha)$. 
\end{theorem}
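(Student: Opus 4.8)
The plan is to prove the bound by induction on the independence number $\alpha$, with the induction mirroring the recursion $f(\alpha)=\alpha+\alpha f(\alpha-1)$.

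First I would settle the base case $\alpha=1$. Here every two vertices are adjacent, so $D$ is semicomplete, and I would use that the condensation of a semicomplete digraph is a transitive tournament $C_1\rightarrow C_2\rightarrow\cdots\rightarrow C_t$, together with the fact that a strongly connected semicomplete digraph on at least three vertices contains a directed triangle. Clique-acyclicity then forces each strong component $C_i$ to have at most two vertices. Choosing any vertex $v$ in the terminal component $C_t$, every other vertex sends an arc to $v$ (those in earlier components by the forward orientation of the condensation, the possible partner of $v$ in $C_t$ by strong connectivity), so $\{v\}$ is an in-dominating set and $\gamma(D)=1=f(1)$.

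For the inductive step ($\alpha\ge 2$) I would take a maximum independent set $I=\{u_1,\dots,u_\alpha\}$ and put all of $I$ into the in-dominating set; this accounts for the additive term $\alpha$. A vertex outside $I$ is already in-dominated by $I$ unless all of its arcs to $I$ are directed from $I$ into it. The aim is to cover the not-yet-dominated vertices by $\alpha$ induced subdigraphs, each of independence number at most $\alpha-1$, and recurse, contributing $\alpha f(\alpha-1)$. This combination is legitimate because in-domination only requires a single out-arc into the chosen set: an in-dominating set of each piece, read inside $D$, still in-dominates, so the union of $I$ with the recursively produced sets in-dominates $D$ and has size at most $\alpha+\alpha f(\alpha-1)=f(\alpha)$.

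The natural engine for the drop in independence is the non-neighbourhood: for any vertex $v$ the set of vertices non-adjacent to $v$ induces a digraph of independence number at most $\alpha-1$, since $v$ can be adjoined to any independent set among its non-neighbours. Thus the subdigraphs $R_i$ induced by the non-neighbours of $u_i$ are the candidate pieces, and each vertex non-adjacent to some $u_i$ is handled by the corresponding recursive call. The main obstacle, and the only place the hypothesis of no directed triangle is genuinely needed, is a vertex adjacent to \emph{every} $u_i$ with all arcs directed from $I$ into it: such a vertex lies in none of the $R_i$ and is not in-dominated by $I$. I would control these exceptional vertices through the transitivity that clique-acyclicity forces (if $a\rightarrow b\rightarrow c$ and $a,c$ are adjacent then $a\rightarrow c$), showing either that such a vertex can be absorbed into one of the pieces $R_i$ without raising its independence number, or that it already has an out-arc into the set returned by one of the recursive calls. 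Pinning down this exceptional set is the crux; once it is dispatched, the rest is bookkeeping on the recursion.
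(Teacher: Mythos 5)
The paper itself offers no proof of this statement: it is quoted directly from Gy\'arf\'as, Simonyi and T\'oth \cite{GST}, so your attempt has to be judged on its own merits rather than against an in-paper argument. Your base case is complete and correct (a strong semicomplete digraph on at least three vertices contains a directed triangle, so clique-acyclicity forces every strong component to have at most two vertices, and any vertex of the terminal component is in-dominating), and the overall architecture of the induction -- a maximal independent set $I=\{u_1,\dots,u_\alpha\}$ supplying the additive term $\alpha$, and the non-neighbourhoods $R_i$ supplying $\alpha f(\alpha-1)$ -- does match the shape of the recursion.

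But the step you yourself flag as ``the crux'' is not a loose end; it is the theorem, and the construction as you describe it is actually false for an adversarial choice of $I$. Take $D$ on vertex set $I\cup B$ where $I$ and $B$ are independent sets of size $\alpha$ and all $\alpha^2$ arcs go from $I$ to $B$. This digraph is acyclic (so certainly clique acyclic) and has independence number $\alpha$, and $I$ is a maximum independent set. Here every $R_i=I\setminus\{u_i\}$ is arcless, so each recursive call returns $I\setminus\{u_i\}$ and your final set is exactly $I$ -- which in-dominates nothing in $B$, since the vertices of $B$ have out-degree zero and therefore must all belong to every in-dominating set. Neither of your two proposed escape routes can work in this example: a $B$-vertex has no out-arc to any recursive set because it has no out-arcs at all, and adjoining $B$ to some $R_i$ restores independence number $\alpha$, killing the induction. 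More generally, if $v$ is adjacent to every $u_i$ with $u_i\rightarrow v$ and $v\not\rightarrow u_i$, then clique-acyclicity forces every out-neighbour $w$ of $v$ to satisfy $w\not\rightarrow u_i$ for all $i$ (otherwise $u_i\rightarrow v\rightarrow w\rightarrow u_i$ is a directed triangle), so the obstruction propagates along out-arcs rather than resolving itself. What is missing is a genuine idea for choosing $I$: for instance, if the exceptional set $B_I$ of a maximum independent set $I$ itself contains a maximum independent set $I'$, one can replace $I$ by $I'$, and the absence of directed triangles makes the successive sets pairwise disjoint so that this iteration terminates with an $I$ whose exceptional set has independence number at most $\alpha-1$; one must then still verify that the additional recursion fits inside the budget $\alpha+\alpha f(\alpha-1)$. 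None of this is in your write-up, so as it stands the proposal is a plan with its central difficulty unresolved.
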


In view of Theorem~\ref{indep2} together with the argument in the proof of Theorem~\ref{alpha}, we propose the following conjecture. 

\begin{conjecture}
There exists a polynomial time algorithm to find a minimum safe set in a clique acyclic digraph with a constant independence number $\alpha$. 
\end{conjecture}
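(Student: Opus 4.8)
The plan is to combine the condensation-based dynamic programme behind Theorem~\ref{thm:small_strong_comp_poly_tournament} with a structural decomposition of the individual strong components that exploits the bounded independence number. First I would form the condensation of $D$ and process its strong components $C_1,\dots,C_p$ in reverse topological order exactly as in the proof of Theorem~\ref{thm:small_strong_comp_poly_tournament}, maintaining as the state the size of the smallest strong component selected so far inside the partial safe set. The whole difficulty is concentrated in the \emph{local} step: for a fixed strong component $C_i$ one must compute, for every relevant ``profile'' (the size of the smallest strong component of $D[W]$, the size of the largest strong component of $C_i-W$, and whether condition~(i) can still be met by components chosen later), the minimum number of vertices $|W|$ realising that profile. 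In Theorem~\ref{thm:small_strong_comp_poly_tournament} this was done by brute force over all $W\subseteq V(C_i)$, at cost $2^{|C_i|}$; here that is hopeless, because—unlike the bounded safe number exploited in Theorem~\ref{alpha} in the acyclic case—the safe number of a clique acyclic digraph with independence number $2$ may grow linearly in $n$. Indeed, take a directed $C_5$ and blow up each of its five vertices into a transitive tournament of order $m$, with all arcs present between consecutive blobs: the result is strongly connected, has no directed triangle (every directed cycle must traverse all five blobs), and has independence number $2$, yet every safe set must break this cyclic structure enough that all strong components of $D-S$ are small, forcing $|S|=\Theta(n)$.

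Second, to solve the local problem I would prove and use a structure theorem for strongly connected clique acyclic digraphs of bounded independence number. The example above pins down the expected statement: every such digraph arises by \emph{substituting} acyclic digraphs (each of independence number at most $\alpha$) into the vertices of a strongly connected clique acyclic digraph whose order is bounded by a function $b(\alpha)$ of $\alpha$ alone. Concretely, I would take the modular decomposition of $C_i$; the substituted modules are the maximal sets of twins, while the prime quotient should be forced to be small by a Ramsey-type argument, since in a prime digraph every pair of vertices is distinguished by a third, and the absence of directed triangles together with independence number at most $\alpha$ limits how many pairwise-distinguishable vertices can coexist. Note that each module must itself be acyclic, because a strong module of order at least $3$ would, together with the cyclic structure of the quotient, produce a directed triangle, contradicting clique acyclicity; hence by Theorem~\ref{indep} (and in the denser direction Theorem~\ref{indep2}) the modules are exactly the tractable pieces.

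Third, given such a decomposition of $C_i$ into a bounded prime quotient $Q$ on at most $b(\alpha)$ nodes, each carrying a weight equal to the order of its acyclic module, the local optimisation becomes a weighted problem on $Q$: for each node of $Q$ one only decides how many vertices of its module are placed in $W$ (an acyclic module contributes only singleton strong components once any proper part of it is removed), after which the strong components of $D[W]$ and of $C_i-W$ are dictated by the bounded quotient $Q$ together with these counts. Since $Q$ has bounded order, one can enumerate the at most $2^{b(\alpha)}$ selection patterns of $Q$ and, for each pattern, minimise the total number of chosen vertices subject to the size constraints coming from condition~(ii) by a small integer programme in the module sizes. Feeding the resulting finitely many profiles of $C_i$ into the global dynamic programme then yields $s(D)$, and the running time is polynomial for fixed $\alpha$ just as in the corollary to Theorem~\ref{thm:small_strong_comp_poly_tournament}.

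The main obstacle is the structure theorem itself, namely proving that the prime quotient has order bounded by a function of $\alpha$; this is precisely the point where clique acyclicity and the bounded independence number must be used in tandem, and it is what keeps the statement a conjecture rather than a theorem. A secondary difficulty is the bookkeeping in the local optimisation: when a module is only partially taken into $W$, I must check that the new strong components created inside and outside $W$ still respect both (i) and (ii), and that their interaction with the smallest-component size inherited from $C_{i+1},\dots,C_p$ is tracked correctly. Both difficulties are structural rather than computational, so once the bound $b(\alpha)$ is established the polynomial running time should follow routinely.
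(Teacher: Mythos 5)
The first thing to note is that the paper does not prove this statement at all: it is posed as an open conjecture (motivated by Theorem~\ref{indep2} together with the argument of Theorem~\ref{alpha}), so there is no proof of the authors' to compare yours against, and a complete argument from you would be new research rather than a reconstruction. Your proposal is not such an argument, and you say so yourself: everything hinges on a structure theorem asserting that every strongly connected clique acyclic digraph of independence number at most $\alpha$ is obtained by substituting acyclic digraphs into the vertices of a strongly connected clique acyclic ``base'' digraph of order at most $b(\alpha)$, and you do not prove it. Your preliminary observations are sound and worth keeping: the $C_5$ blow-up does have independence number $2$, no directed triangle, and safe number $\Theta(n)$ (its connectivity is $m$, so any $S$ with $|S|<m$ leaves a strong component of size at least $4m>|S|$ with an arc into $S$), which correctly rules out the brute-force route of Theorem~\ref{alpha}; and the outer dynamic programme over the condensation, as in Theorem~\ref{thm:small_strong_comp_poly_tournament}, is the right shell.

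The more serious problem is that the key lemma you lean on appears to be false as stated, not merely unproven. Modify your own example: in the $C_5$ blow-up with blobs $B_1,\ldots,B_5$, add an arbitrary (``generic'') set of extra arcs, all oriented from $B_1$ to $B_3$. No directed triangle is created: a triangle using a new arc $x_1\rightarrow x_3$ would need a vertex $z$ with $x_3\rightarrow z$ and $z\rightarrow x_1$, but $N^+(x_3)\subseteq B_3\cup B_4$ while $N^-(x_1)\subseteq B_5\cup B_1$, and these are disjoint; a triangle cannot use two new arcs since all new arcs start in $B_1$ and end in $B_3$. The digraph stays strongly connected, and the independence number is still $2$ because no three blobs of $C_5$ are pairwise non-consecutive. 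Yet if the new out-neighbourhoods in $B_3$ of the vertices of $B_1$ are pairwise distinct, one checks that the only module of $D$ containing two vertices of $B_1$ is $V(D)$ itself, so every nontrivial substitution decomposition of $D$ has at least $m=\Theta(n)$ parts. Hence no bound $b(\alpha)$ on the quotient exists even for $\alpha=2$, and the local optimisation step of your plan has no bounded object to enumerate over. The conjecture therefore remains open, and any successful attack will need a different handle on the internal structure of the strong components than modular decomposition with a bounded quotient.
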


\JBJ{In the proof of Theorem \ref{thm:NPhard_tournament} we explicitly use the fact that the tournament $T'$ is not strongly connected so the proof does not extend to the case of strong tournaments.
  \begin{conjecture}
    {\sc safe set} is NP-complete for strong tournaments.
  \end{conjecture}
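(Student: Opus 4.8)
The plan is to reduce from {\sc Feedback Vertex Set} for tournaments, exactly as in the proof of Theorem~\ref{thm:NPhard_tournament}, but to replace the single sink vertex $x$ (whose presence is precisely what destroys strong connectivity) by a gadget that simultaneously restores strong connectivity and still forces a feedback vertex set to appear inside every small safe set. As a preliminary step I would first argue that {\sc Feedback Vertex Set} remains NP-complete when the input tournament is itself strong; this should follow by padding an arbitrary tournament with a few auxiliary vertices whose removal is never advantageous, so that throughout the reduction we may assume the input tournament $T$ is strong. Write $\mathrm{fvs}(T)$ for the minimum size of a feedback vertex set of $T$.

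Given a strong tournament $T$, I would build $T''$ by adding a near-sink vertex $x$ with all arcs $V(T)\to x$, together with an auxiliary vertex $y$ with $x\to y$ and $y\to V(T)$. Then $x\to y\to V(T)\to x$ witnesses that $T''$ is strong. The forcing mechanism I want to exploit is the one already implicit in Theorem~\ref{thm:NPhard_tournament}: if $x$ lies in a safe set $S$ and is a strong component of size $1$ in $T''[S]$, then, since \emph{every} vertex of $V(T)$ dominates $x$, every strong component $M$ of $T''-S$ that is contained in $V(T)$ has an arc to $\{x\}$, and condition (ii) forces $|M|\le 1$; hence $S\cap V(T)$ is a feedback vertex set of $T$. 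For the easy direction this already gives the upper bound: if $W$ is a feedback vertex set of $T$, then $\{x\}\cup W$ is a safe set, so $s(T'')\le |W|+1$ and thus $s(T'')\le \mathrm{fvs}(T)+1$.

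The delicate direction is the matching lower bound $s(T'')\ge \mathrm{fvs}(T)+1$. I would first rule out the case where a small safe set $S$ leaves both $x$ and $y$ outside $S$: in that situation $x$, $y$ and all of $V(T)\setminus S$ merge into a single strong component of $T''-S$ far larger than $|S|$, which by condition (ii) cannot have an arc into $S$ and so violates condition (i). The genuine obstacle, however, is that the return structure needed to make $T''$ strong opens up alternative and possibly cheaper safe sets. Concretely, taking $y$ into $S$ and adjoining any safe set $W'$ of the \emph{strong} tournament $T$ itself also yields a safe set $\{y\}\cup W'$ of $T''$, so a priori $s(T'')$ could be as small as $\min\bigl(s(T),\mathrm{fvs}(T)\bigr)+1$ rather than $\mathrm{fvs}(T)+1$. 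Reconciling these competing demands is the heart of the difficulty: strong connectivity forces the presence of back-arcs, yet every back-arc threatens to create a cheap safe set that bypasses the feedback-vertex-set encoding, which is presumably exactly why the statement is stated only as a conjecture. I expect that resolving it will require a more elaborate gadget --- for instance several near-sinks of carefully chosen sizes, or a large universally in-dominated ``blocker'' whose size is tuned against the threshold --- engineered so that the only safe sets of size below the threshold are precisely those of the form $\{x\}\cup W$ with $W$ a feedback vertex set of $T$.
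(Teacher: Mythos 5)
This statement is posed in the paper as an open \emph{conjecture}, not a theorem: the authors explicitly remark that the proof of Theorem~\ref{thm:NPhard_tournament} ``does not extend to the case of strong tournaments'' precisely because it relies on $T'$ being non-strong, and they offer no proof. So there is no proof in the paper to compare yours against, and your proposal does not close the gap either.

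Your analysis of the obstacle is accurate and worth crediting. The easy direction is fine: for a strong $T$ with $\mathrm{fvs}(T)\ge 1$, the set $\{x\}\cup W$ is indeed a safe set of $T''$ (the vertex $x$ becomes a sink component of size $1$ in $T''[S]$, every vertex of $V(T)\setminus W$ dominates it, and $y$ becomes a singleton source component of $T''-S$ with an arc into $W$). But the lower bound genuinely fails for your gadget, for exactly the reason you state: $\{y\}\cup W'$ is a safe set of $T''$ whenever $W'$ is a safe set of the strong tournament $T$ (the sink component $\{x\}$ of $T''-S$ is handled by the arc $x\to y$ into the singleton component $\{y\}$ of $T''[S]$, and the components of $T-W'$ already satisfy conditions (i) and (ii) with respect to $T[W']$). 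Since by Theorem~\ref{proposition: minimality of safe numbers} there exist $k$-strong tournaments with safe number as small as $k$ while their feedback vertex number can be far larger, $s(T'')$ can drop well below $\mathrm{fvs}(T)+1$, and the reduction does not decide {\sc Feedback Vertex Set}. Your closing paragraph is therefore not a proof step but a statement of the open problem: no gadget is exhibited, no threshold is tuned, and no argument excludes the cheap safe sets. As it stands the proposal establishes nothing beyond what the paper already observes when it formulates the conjecture; to count as progress you would need either a construction in which every safe set of $T$ itself is provably expensive (so that the $\{y\}\cup W'$ route is never cheaper), or a reduction from a different problem that is robust to the back-arcs required for strong connectivity.
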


}




\end{spacing}


\begin{thebibliography}{0}

\bibitem{safe2018}
R. \'{A}gueda, N. Cohen, S. Fujita, S. Legay, Y. Manoussakis, Y. Matsui, L. Montero, R. Naserasr, H. Ono, Y. Otachi, T. Sakuma, Z. Tuza, R. Xu,
Safe sets in graphs: Graph classes and structural parameters. \textit{Journal of Combinatorial Optimization}, \textbf{36} (2018) 1221-1242.




\bibitem{BG2008}
J. Bang-Jensen and G. Gutin,
Digraphs: Theory, Algorithms and Applications,
in: Springer Monographs in Mathematics,
Springer Verlag, London, 2008.


\bibitem{bangSJDM5}
  J. Bang-Jensen and C. Thomassen, A polynomial algorithm for the 2-path problem for semicomplete digraphs,
  \textit{SIAM Journal on  Discrete Mathematics} \textbf{5} (1992) 366-376.


\bibitem{cygan2016}
M. Cygan, F. V. Fomin, L.Kowalik, D. Lokshtanov,
D. Marx, M. Pilipczuk,
M. Pilipczuk and S. Saurabh, Parameterized algorithms, Springer Verlag 2016.
  
  
\bibitem{ehard}
S. Ehard and D. Rautenbach,
Approximating connected safe sets in weighted trees,
\textit{ArXiv}:1711.11412v2 (2017).


\bibitem{wsf}
R. B. Bapat, S. Fujita, S. Legay, Y. Manoussakis, Y. Matsui, T. Sakuma, Z. Tuza,
Weighted safe set problem on trees, \textit{Networks}, \textbf{71} (2018) 81--92.

\bibitem{b}
R. Belmonte, T. Hanaka, I. Katsikarelis, M. Lampis, H. Ono, Y. Otachi,
Parameterized complexity of safe set,
\textit{ArXiv}:1901.09434 (2019).

\bibitem{cormen2009}
  T.H. Cormen, C.E. Leiserson, R.L. Rivest and C. Stein, Introduction to algorithms 3rd edition, MIT press, Cambridge Massachusetts, 2009.


\bibitem{SF2018}
S. Fujita and M. Furuya,
Safe number and integrity of graphs,
\textit{Discrete Applied Mathematics.}, \textbf{247} (2018)  398--406

\bibitem{FPS:path:cycle}
S. Fujita, T. Jensen, B. Park, T. Sakuma, On weighted safe set problem on paths and cycles, \textit{Journal of Combinatorial Optimization}, \textbf{37} (2019) 685--701.

\bibitem{FMS2016}
S. Fujita, G. MacGillivray, T. Sakuma,
Safe set problem on graphs,
{\it Discrete Applied Mathematics} {\bf 215} (2016) 106-111.

\bibitem{ganianDAM168}
R. Ganian, P. Hlineny, J. Kneis, A. Langer, J. Obdrzalek and P. Rossmanith,
Digraph width measures in parametrized algorithmics, {\em Discrete Applied Mathematics.} {\bf 168} (2014) 88-107.

\bibitem{GST}
A. Gy\'arf\'as, G. Simonyi, \'A. T\'oth, 
Gallai colorings and domination in multipartite digraphs, {\em Journal of  Graph Theory} {\bf 71} (2012) 278-292.

\bibitem{HT1976}
R. H\"{a}ggkvist, C. Thomassen,
On pancyclic digraphs,
{\it Journal of  Combininatorial Theory Ser. B}  {\bf 20} (1976) 20-40.


\bibitem{host}
P. Hosteins,
A compact mixed integer linear formulation for safe set problems, 
preprint. 

\bibitem{kkp}
B. Kang, S-R. Kim, B. Park, On the safe sets of Cartesian product of two complete graphs, \textit{Ars Combinatoria}, \textbf{141} (2018) 243--257.

\bibitem{karp1972} R.M. Karp, Reducibility among combinatorial problems. In R.E. Miller and J.W. Thatcher (eds.), {\em Complexity of computer computations}, Plenum Press New York (1972) 85-103.

\bibitem{megiddoTCS61} N. Megiddo and U. Vishkin, On finding a mimimum dominating set in a tournament, {\em Theoretical Computer Science} {\bf 61} (1988) 307-316.

\bibitem{Reid1985}
K.B. Reid,
Two complementary circuits in two-connected tournaments,
{\it Annals of Discrete Mathematics} {\bf 27} (1985) 321-334.

\bibitem{Song1993}
Z.M. Song,
Complementary cycles of all lengths in tournaments,
{\it Journal of  Combinatorial Theory Ser. B} {\bf 57} (1993) 18-25.


\bibitem{speckenmeyerLNCS411}
  E. Speckenmeyer, On feedback problems in digraphs, in Proc. 15 WG 89, Springer-Verlag,
  \textit{Lecture Notes in Comput. Science}, \textbf{411} (1989), pp. 218-231.

\end{thebibliography}
\end{document}